\documentclass[11pt]{article}

\usepackage[margin=1.05in]{geometry}
\usepackage{amsmath,amssymb,amsthm,mathtools}
\usepackage{booktabs}
\usepackage{enumitem}
\usepackage{microtype}
\usepackage{xcolor}
\usepackage{aliascnt}
\usepackage[colorlinks=true,linkcolor=blue!55!black,citecolor=blue!55!black,urlcolor=blue!60!black]{hyperref}
\usepackage[nameinlink,capitalise,noabbrev]{cleveref}

\newtheorem{theorem}{Theorem}[section]
\newaliascnt{lemma}{theorem}
\newtheorem{lemma}[lemma]{Lemma}
\aliascntresetthe{lemma}
\newaliascnt{proposition}{theorem}
\newtheorem{proposition}[proposition]{Proposition}
\aliascntresetthe{proposition}
\newaliascnt{corollary}{theorem}
\newtheorem{corollary}[corollary]{Corollary}
\aliascntresetthe{corollary}
\theoremstyle{definition}
\newaliascnt{definition}{theorem}
\newtheorem{definition}[definition]{Definition}
\aliascntresetthe{definition}
\newaliascnt{remark}{theorem}

\aliascntresetthe{remark}

\crefname{theorem}{Theorem}{Theorems}
\crefname{lemma}{Lemma}{Lemmas}
\crefname{proposition}{Proposition}{Propositions}
\crefname{corollary}{Corollary}{Corollaries}
\crefname{definition}{Definition}{Definitions}
\crefname{remark}{Remark}{Remarks}

\newcommand{\FB}{\operatorname{FB}}
\newcommand{\SO}{\operatorname{SO}}
\newcommand{\BO}{\operatorname{BO}}
\newcommand{\RO}{\operatorname{RO}}
\newcommand{\GFT}{\operatorname{GFT}}
\newcommand{\E}{\mathbb E}
\newcommand{\Prb}{\mathbb P}

\newcommand{\Rtwo}{\mathcal R_2}
\newcommand{\BetaFn}{\mathrm B}

\hypersetup{
  pdftitle={Tighter Bounds for the Random-Offerer Mechanism in Bilateral Trade},
  pdfauthor={Sunghyeon Jo},
  pdfsubject={Approximation guarantees for bilateral trade},
  pdfkeywords={bilateral trade, gains from trade, random offerer, mechanism design}
}

\title{Tighter Bounds for the Random-Offerer Mechanism\\in Bilateral Trade}
\author{Sunghyeon Jo}
\date{July 2026}

\begin{document}
\maketitle

\begin{abstract}
The random-offerer mechanism for bilateral trade selects the seller or the
buyer uniformly and lets the selected agent make a profit-maximizing
take-it-or-leave-it offer.  Let $\rho_{\rm RO}$ be the infimum, over
independent value distributions, of the mechanism's gains from trade divided
by first-best gains from trade.  We prove
\[
  \frac1\pi\le \rho_{\rm RO}<0.460242308085529.
\]
For the lower bound, we improve the previous guarantee from approximately
$0.317844$ to $1/\pi\approx 0.318310$.  The proof uses a parameterized
Lagrangian bound for pointwise-monotone allocations.  At multiplier one, this bound has
coefficient $2/\pi$, and the Lagrangian separates into two terms controlled by
the optimal seller-offering and buyer-offering profits.  For the upper bound,
we construct an explicit family consisting of a truncated equal-revenue buyer
and a seller distribution with a tilted power-law lower tail and a
constant-virtual-cost segment.  The family satisfies
$\FB/\RO>2.17276852308451$, improving the previous explicit ratio $2.0749$;
rigorous interval arithmetic certifies the numerical inequality.
\end{abstract}

\section{Introduction}

Bilateral trade is the basic setting in which private information can prevent
all gains from trade from being realized.  A seller owns one indivisible item
and values it at $S$, while a buyer values it at $B$.  The values are drawn
independently from commonly known distributions, and each agent observes only
its own realization.  The first-best allocation trades whenever $B>S$ and
generates expected gains from trade
\[
   \FB:=\E[(B-S)_+].
\]
Myerson and Satterthwaite~\cite{MS83} showed that no mechanism can attain this
benchmark for all priors while satisfying Bayesian incentive compatibility,
individual rationality, and budget balance.

The random-offerer mechanism is a simple way to conduct trade under these
constraints.  It selects the seller or the buyer with equal probability and
lets the selected agent make a take-it-or-leave-it offer after observing its
value.  When the seller is selected, a seller of type $s$ chooses a price $p$
maximizing $(p-s)\Prb[B\ge p]$; when the buyer is selected, a buyer of type
$b$ chooses a price $p$ maximizing $(b-p)\Prb[S\le p]$.  Let $\SO$ and $\BO$
denote the gains from trade in the two cases and set
\[
   \RO:=\frac{\SO+\BO}{2}.
\]
We study the worst-case fraction of first-best gains from trade,
\[
   \rho_{\rm RO}:=
   \inf_{F_S,F_B:\,\FB>0}\frac{\RO}{\FB}.
\]

Deng, Mao, Sivan, and Wang~\cite{DMSW22} proved the first constant lower bound
on $\rho_{\rm RO}$.  Fei~\cite{Fei22} improved the guarantee to
$1/T=0.317844432899372\ldots$, where $T>1$ is the larger solution of
$T=2+\log T$; Hartline and Wang~\cite{HW25} later gave a geometric proof of
the same bound.  In the other direction, Babaioff, Dobzinski, and
Kupfer~\cite{BDK21} constructed instances on which random offerer obtains less
than one half of first best.  Cai, Gupta, Li, and Mehta~\cite{CGLM26} recently
obtained an explicit instance with $\FB/\RO\ge 2.0749$.  Thus the previously
known range was
\[
  0.317844432899372\ldots
  \le \rho_{\rm RO}
  \le \frac1{2.0749}
  =0.481950937394573\ldots.
\]

We tighten both sides of this range:
\begin{equation}\label{eq:main-interval}
  \frac1\pi\le \rho_{\rm RO}<0.460242308085529.
\end{equation}
The lower bound holds for every pair of independent priors on a common compact
interval.

\begin{theorem}\label{thm:pi-main}
For every pair of independent value distributions on a common compact interval,
\[
   \RO\ge \frac1\pi\,\FB.
\]
\end{theorem}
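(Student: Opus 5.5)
We will reduce the statement to a one-dimensional variational inequality by working in quantile space and exploiting the structure of the two offer mechanisms. Normalize the common interval to $[0,1]$. By standard approximation (discretize, or smooth the distributions, and use continuity of $\FB$, $\SO$, $\BO$ under weak convergence with bounded support), it suffices to treat distributions with positive continuous densities. Write $\Phi_S(s)=s+F_S(s)/f_S(s)$ for the seller's virtual cost and $\Phi_B(b)=b-(1-F_B(b))/f_B(b)$ for the buyer's virtual value.

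\textbf{Step 1: the offer mechanisms as allocations.} In seller-offering, a type-$s$ seller posts the monopoly price $p_S(s)$ against $F_B$, which is nondecreasing in $s$. Trade occurs iff $B\ge p_S(s)$, and $\SO=\E[(B-S)\mathbf 1\{B\ge p_S(S)\}]$. The seller's profit satisfies $\Pi_S(s)=\max_p (p-s)(1-F_B(p))$. Symmetric statements hold for $\BO$ with $\Pi_B(b)=\max_p(b-p)F_S(p)$.

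We will use two identities. First, $\SO\ge \E[\Pi_S(S)]$ plus the buyer's surplus, so $\SO\ge \E[\Pi_S(S)]$; likewise $\BO\ge \E[\Pi_B(B)]$. Second, $\SO$ also accounts for the buyer's consumer surplus $\E[(B-p_S(S))_+]$.

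\textbf{Step 2: a parameterized Lagrangian bound.} Following the abstract, the key inequality is a pointwise bound, for every pair $(s,b)$ with $b>s$ and a multiplier $\lambda>0$, of the form
\[
(b-s)\le \frac{\pi}{2}\Big(\text{contribution of }(s,b)\text{ to }\SO+\BO\Big)+\lambda\cdot(\text{terms integrating to }\le 0),
\]
where the Lagrangian terms come from the monopoly optimality conditions. Since $p_S(s)$ maximizes profit, $(p-s)(1-F_B(p))\le \Pi_S(s)$ for all $p$, and similarly for the buyer. Integrating these "any price is suboptimal" constraints against a well-chosen measure over $p$ gives
\[
\FB\le \frac{\pi}{2}\Big(\E[\Pi_S(S)]+\E[\Pi_B(B)]\Big)\le \frac{\pi}{2}(\SO+\BO)=\pi\,\RO.
\]
Concretely, for a fixed seller $s$, the equal-revenue bound $1-F_B(p)\le \Pi_S(s)/(p-s)$ gives $\E[(B-s)_+]=\int_s^1(1-F_B(p))\,dp$. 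Used naively, this produces a logarithm, which is the source of Fei's constant $T$. The improvement comes from combining this with the buyer-side bound $F_S(p)\le \Pi_B(b)/(b-p)$ inside the double integral $\FB=\iint \mathbf 1\{s<p<b\}\,dF_S(s)\,dF_B(b)\,dp$. At each threshold $p$, we will bound $\Prb[S<p<B]=F_S(p)(1-F_B(p))$ by a weighted geometric/arithmetic mean of the two profit constraints. The constant $\int$ of a kernel of the form $1/\sqrt{(p-s)(b-p)}$ over $p\in(s,b)$ equals $\pi$, which explains $2/\pi$ at $\lambda=1$.

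So the plan is to prove
\[
F_S(p)(1-F_B(p))\le \sqrt{F_S(p)(1-F_B(p))}\cdot\sqrt{\cdots},
\]
and then apply Cauchy--Schwarz to separate the expression into an $\E[\Pi_S]$ part and an $\E[\Pi_B]$ part.

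\textbf{Step 3 (main obstacle): the separation.} The hard step is making the pointwise-monotone allocation Lagrangian separate exactly into a pure seller-profit term and a pure buyer-profit term with total coefficient $2/\pi$. I would first try the arcsine weighting $w(p)=1/(\pi\sqrt{(p-s)(b-p)})$ together with AM--GM, $\sqrt{xy}\le (x+y)/2$, choosing the multiplier $\lambda=1$ so that the two halves balance. I would then verify the resulting one-variable extremal inequality, where equality-type instances are equal-revenue distributions, by direct integration. The remaining steps are routine: remove the regularity assumptions by approximation, and combine the pieces using $\SO\ge\E\Pi_S$ and $\BO\ge\E\Pi_B$.
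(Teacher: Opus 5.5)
\textbf{There is a genuine gap: the central inequality is never proved.} Your intermediate target, $\FB\le\frac{\pi}{2}(\Pi_S+\Pi_B)$ followed by $\Pi_S\le\SO$ and $\Pi_B\le\BO$, is the right one. It is exactly what the paper's chain $\frac{2}{\pi}\FB\le V_1\le\Pi_S+\Pi_B$ delivers. But Step~3, which you flag as the main obstacle, is the whole theorem, and what you give there is a heuristic. The displayed inequality $F_S(p)(1-F_B(p))\le\sqrt{F_S(p)(1-F_B(p))}\cdot\sqrt{\cdots}$ has no right-hand side. The ``terms integrating to $\le0$'' in Step~2 are never specified. The identity $\int_s^b dp/\sqrt{(p-s)(b-p)}=\pi$ is not connected to any profit constraint.

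The constraints actually available at a threshold are $1-F_B(p)\le\Pi_S(s)/(p-s)$ and $F_S(p)\le\Pi_B(b)/(b-p)$. Integrated over $p$, they give the logarithm you mention. Multiplied together, they give $F_S(p)(1-F_B(p))\le\Pi_S(s)\Pi_B(b)/((p-s)(b-p))$, which is a \emph{product} of profits. Taking square roots and applying AM--GM then controls $\sqrt{F_S(p)(1-F_B(p))}$ rather than $F_S(p)(1-F_B(p))$. Averaging over pairs $s<p<b$ against $dF_S\,dF_B$ reintroduces a factor $(F_S(p)(1-F_B(p)))^{-1/2}$ that can be arbitrarily large. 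So the sketched route does not close, and no replacement is offered.

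\textbf{Where the constant comes from in the paper.} There, $\pi/2$ is not an arcsine normalization on prices. It is a global constant attached to $V_1=\sup_{x\in\Rtwo}\{\GFT(x)+\Lambda(x)\}$.

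The easy half is $V_1\le\Pi_S+\Pi_B$. At multiplier one the Lagrangian splits as $\E[(\phi_B(B)-S)x]+\E[(B-\psi_S(S))x]$, and monotone rows and columns are mixtures of seller and buyer posted prices.

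The hard half, $V_1\ge\frac{2}{\pi}\FB$, requires \emph{constructing} a good pointwise-monotone allocation for every prior. The paper does this with:
\begin{itemize}
\item a minimax reduction on lattices;
\item a parameterized version of the Liu--Qin--Ren--Wang greedy construction, in which $C(\alpha)=q\BetaFn(q,q)=\sum_{j\ge0}\frac{q}{j+q}\frac{(a)_j}{j!}$ bounds the threshold-zero allocations and the truncation residuals, and $C(1)=\pi/2$;
\item quantile ironing to remove the regularity assumption;
\item a regularity-preserving lattice limit.
\end{itemize}
Your proposal has no substitute for this existence argument.

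\textbf{A secondary error in the reduction step.} $\SO$ and $\BO$ are not continuous under weak convergence. A proposer indifferent between two prices can switch under an arbitrarily small perturbation, so gains from trade jump. The fix is to approximate only $\FB$, $\Pi_S$ and $\Pi_B$, which are stable, and then apply $\Pi_S\le\SO$ and $\Pi_B\le\BO$ on the original instance, as the paper does.
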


The upper bound is witnessed by an explicit family of distributions.

\begin{theorem}\label{thm:upper-main}
There exist independent value distributions supported on $[0,1]$, with a
continuous seller distribution and a buyer distribution having a continuous
part and one atom, such that
\[
   \frac{\FB}{\RO}>2.17276852308451.
\]
Consequently, $\rho_{\rm RO}<0.460242308085529$.
\end{theorem}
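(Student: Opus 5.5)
The plan is an explicit construction followed by a certified numerical evaluation, along the lines the abstract describes. Normalize support to $[0,1]$. For the buyer, take a truncated equal-revenue distribution $\Prb[B\ge p]=h/p$ on $[h,1)$ with an atom of mass $h$ at $1$. Every seller type $s<h$ then faces a flat revenue curve $(p-s)h/p$ that is increasing in $p$, so its seller offer is the top price $1$. Hence $\SO$ is the single term $\E[(1-S)\mathbf 1\{\text{offer accepted}\}]$ restricted to the types that offer $1$, and it is small because trade happens only against the atom.

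For the seller, take a CDF $F_S$ built from two pieces. On $[0,a]$ it is a tilted power law, $F_S(s)=c\,s^{\alpha}(1+\beta s)^{\gamma}$ or a similar two-parameter tilt. On $[a,h']$ it is a segment with constant virtual cost $\phi(s)=s+F_S(s)/f_S(s)\equiv\kappa$, i.e. $F_S(s)\propto (\kappa-s)^{-1}$. The constant virtual cost segment makes buyers over a whole range indifferent among many offers, which keeps $\BO$ low: the buyer's optimal offer is given by $\phi(p)=b$ (or the corner), so trade occurs only when $S\le p^*(b)$. The parameters $(h,a,\alpha,\beta,\gamma,\kappa)$ are then tuned numerically.

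With the family fixed, I would compute the three quantities in closed or one-dimensional-integral form.
\begin{itemize}[leftmargin=*]
\item $\FB=\int_0^1 F_S(x)\,\Prb[B>x]\,dx$.
\item $\SO=\int F_S$-weighted gains at the seller's chosen price $p_S(s)$. For $s$ below the truncation this is price $1$; above it, one must verify by computing the buyer's revenue curve $(p-s)\Prb[B\ge p]$ that no interior price beats the atom.
\item $\BO=\E_B[\E_S[(B-S)\mathbf 1\{S\le p_B(B)\}]]$, where $p_B(b)$ maximizes $(b-p)F_S(p)$. Using the explicit pieces, $p_B$ is obtained from $\phi(p)=b$ on the power-law region, and it lies in a corner or on the constant-$\kappa$ region otherwise.
\end{itemize}
The key checks are that the optimizers are exactly as claimed, i.e. that the revenue curves are quasi-concave on each piece and that ties are broken favorably. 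Ties do not matter for the infimum, since any maximizer is allowed and the bound only requires $\RO$ to be small for the specified (for example, worst or best) choice. This needs to be stated carefully: if $\RO$ is defined with welfare-maximizing tie-breaking, I will perturb slightly to make the maximizers unique.

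The main obstacle is certification. After choosing the parameters numerically (optimizing $\FB/\RO$), I must prove $\FB/\RO>2.17276852308451$ rigorously. I would reduce each of $\FB,\SO,\BO$ to finitely many one-dimensional integrals of elementary functions, partly in closed form (logs and incomplete beta functions from the power-law pieces). I would then bound them with interval arithmetic: subdivide the domains and use enclosures of the integrands on each cell, plus monotonicity to enclose $p_B(b)$ via interval Newton on $\phi(p)=b$. Finally I check that the lower enclosure of $\FB$ divided by the upper enclosure of $\RO$ exceeds the constant. The corollary $\rho_{\rm RO}<1/2.17276852308451<0.460242308085529$ is immediate arithmetic.
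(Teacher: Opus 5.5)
Your construction has the paper's architecture, but you misread what the middle segment does. In the paper its virtual cost is $\kappa=1+\eta$, strictly above every buyer value. Hence each buyer's profit $(b-p)F_S(p)$ is strictly decreasing on the segment, and every buyer with $b>\phi(a)$ (your breakpoint $a$, your virtual cost $\phi$) offers exactly $a$. Segment sellers therefore contribute to $\FB$ and $\SO$ but never trade when the buyer offers. No buyer is indifferent over a range. If you took $\kappa<1$, buyers above $\kappa$ would offer past the segment and $\BO$ would rise. The only indifference is the atom buyer $b=1$ when $\kappa=1$, which is why the paper keeps $\eta>0$ fixed while the truncation goes to zero. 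Also, every $s>0$ (not only $s<h$) posts price $1$, since $(p-s)h/p=h(1-s/p)$ is increasing in $p$. So $\SO=h\int_0^1F_S$ exactly, and no further check is needed.

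The genuine gap is the certification step. You plan to fix $h>0$ and certify $\FB/\RO$ for that instance. After dividing by $h$, however, the instance converges to its limit only at rate $h^{\alpha}$, where $\alpha\approx0.095$ is the lower-tail exponent. For example, $\FB/h=\int_0^1F_S(s)s^{-1}\,ds-\int_0^hF_S(s)(s^{-1}-h^{-1})\,ds$, and the last integral is of order $h^{\alpha}$. The buyer-offer term loses relatively less, so for the paper's parameters the ratio approaches its limit from below, roughly like $0.1\,h^{\alpha}$. The stated constant lies only about $10^{-14}$ below the limiting value $2.1727685230845202\ldots$, so a certifiable fixed instance needs $h$ of order $10^{-137}$. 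At $h=10^{-20}$ the ratio is still about $10^{-3}$ short, and numerical tuning at sensible scales will not reach the threshold.

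The missing idea is the paper's sequential-limit lemma. Since $\FB$, $\SO$, $\BO$ are all of order $h$, divide by $h$, let $h\to0$ and then $\kappa\downarrow1$, and justify both limits by dominated convergence. What gets certified is then the $h$-free ratio $2\mathcal F/(\mathcal S+\mathcal B)$ of three one-dimensional integrals, and an actual finite instance exists by the limit without ever being exhibited. For $\BO$, the paper integrates over seller types rather than buyer types. A seller $s\le a$ trades exactly when $B\ge\phi(s)$, which gives the explicit integrand $1-s/\phi(s)-\log\phi(s)$ against $dF_S$ and removes any need for interval Newton on $\phi^{-1}$.

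Finally, the threshold in the statement is the certified value of one specific family, with tilt $\exp(k_1(y-1)+k_2(y^2-1))$ and explicit rational parameters. A different tilt such as $(1+\beta s)^\gamma$ carries no guarantee of clearing it, so your argument is incomplete until concrete parameters are fixed and certified.
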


The exact value of $\rho_{\rm RO}$ remains open.  The lower-bound proof
yields a family of inequalities indexed by the Lagrange multiplier, and the
choice $\alpha=1$ is the point at which the Lagrangian matches the two offerer
profit problems.  The upper-bound construction gives an explicit analytic
family whose performance can be evaluated rigorously.  Together, the two
results substantially narrow the known range of $\rho_{\rm RO}$.

We next describe the two arguments.  Recent work of Liu, Qin, Ren, and
Wang~\cite{LQRW26} proves that second-best bilateral trade obtains at least one
half of first-best gains from trade.  Their proof uses a Lagrangian reduction
and a greedy construction of a pointwise-monotone allocation on a lattice.  We
keep the Lagrange multiplier $\alpha$ as a free parameter.  For smooth priors,
let $\Lambda(x)$ be the budget-balance functional and define
\[
   V_\alpha:=\sup_x\{\GFT(x)+\alpha\Lambda(x)\},
\]
where the supremum is over pointwise-monotone allocations.  We prove, for
every $\alpha>0$,
\begin{equation}\label{eq:C-beta-intro}
  V_\alpha\ge \frac{1}{C(\alpha)}\,\FB,
  \qquad
  C(\alpha):=
  \frac{1}{1+\alpha}
  \frac{\Gamma\!\left(\frac{1}{1+\alpha}\right)^2}
       {\Gamma\!\left(\frac{2}{1+\alpha}\right)}.
\end{equation}
At $\alpha=1$,
\[
   C(1)=\frac12\BetaFn\!\left(\frac12,\frac12\right)=\frac\pi2.
\]
The same choice of multiplier yields
\[
  \GFT(x)+\Lambda(x)
  =\E[(\phi_B(B)-S)x(S,B)]
   +\E[(B-\psi_S(S))x(S,B)],
\]
where $\phi_B$ is the buyer's virtual value and $\psi_S$ is the seller's
virtual cost.  A nondecreasing row of a pointwise-monotone allocation is a
mixture of seller posted-price rules, and a nonincreasing column is a mixture
of buyer posted-price rules.  The two terms are therefore bounded by the
optimal seller-offering and buyer-offering profits, $\Pi_S$ and $\Pi_B$.
Since gains from trade dominate proposer profit, we obtain
\[
   \frac{2}{\pi}\FB
   \le V_1
   \le \Pi_S+\Pi_B
   \le \SO+\BO
   =2\RO.
\]

The proof of \eqref{eq:C-beta-intro} begins with the consecutive-lattice
analysis of Liu et al.  On the lattice, we prove the greedy inequality for every $\beta$ satisfying
$\beta C(\alpha)\le1$.  We then use quantile ironing to remove the seller
regularity assumption without decreasing the optimized Lagrangian value.  A
regularity-preserving lattice approximation transfers the discrete bound to
smooth distributions, and a final approximation argument gives the stated
result for arbitrary Borel priors on a compact interval.

For the upper bound, the buyer has a truncated equal-revenue distribution.
This choice makes every positive seller type optimally post the price $1$.  The
seller CDF follows a tilted power law on a lower interval, has constant virtual
cost on a middle interval, and equals one above a cutoff.  The lower interval
makes the buyer's optimal offer follow a prescribed increasing map.  The
middle interval makes all larger buyer types offer the same boundary price;
seller types in that interval contribute to first best and to the
seller-offering mechanism but do not trade when the buyer makes the offer.
After sending the buyer truncation and seller perturbation to zero in sequence,
the ratio $\FB/\RO$ reduces to three one-dimensional integrals.  An explicit
choice of five rational parameters gives the limiting value
\[
   2.172768523084520237\ldots.
\]
The incentive properties of the construction are proved analytically, and
interval arithmetic supplies rigorous enclosures for the three integrals.

\section{Model and preliminaries}\label{sec:model}

The seller's value $S$ and buyer's value $B$ are independent Borel random
variables.  Their distributions are common knowledge.  We normalize their
compact support to $[0,1]$; a common positive affine change preserves every
gains-from-trade ratio in the paper.
Write $F_S$ and $F_B$ for their CDFs.  Whenever a density or point mass is
present, write it as $f_S$ or $f_B$, with the argument indicating which
interpretation applies.

\begin{definition}[Gains from trade and first best]\label{def:gft}
An allocation rule $x:[0,1]^2\to[0,1]$ gives the probability of trade at each
reported pair $(s,b)$.  Its expected gains from trade are
\[
   \GFT(x):=\E[(B-S)x(S,B)].
\]
The first-best rule is $x^{\rm FB}(s,b)=\mathbf 1\{b>s\}$, and
\[
   \FB:=\E[(B-S)_+].
\]
\end{definition}

We next define the seller- and buyer-offering mechanisms.  A proposer may post
a price in $[0,1]$ or choose no trade.  The responder accepts at equality.  On
the relevant range ($p\ge s$ for the seller and $p\le b$ for the buyer), each
profit objective is upper semicontinuous and therefore attains its maximum.
When several prices are optimal, choose the smallest one.  The lower-bound
argument is independent of this tie-breaking rule.

\begin{definition}[Seller-offering mechanism]\label{def:SO}
After observing $s$, the seller chooses a price $p\in[0,1]$ maximizing
\[
   (p-s)\Prb[B\ge p].
\]
The buyer accepts when $B\ge p$.  The mechanism's gains from trade are denoted
$\SO$.  The seller's ex-ante profit is
\[
   \Pi_S:=\E_S\!\left[\max\!\left\{0,\sup_{p\in[0,1]}
             (p-S)\Prb[B\ge p]\right\}\right],
\]
where a price with zero trade is available.
\end{definition}

\begin{definition}[Buyer-offering mechanism]\label{def:BO}
After observing $b$, the buyer chooses a price $p\in[0,1]$ maximizing
\[
   (b-p)\Prb[S\le p].
\]
The seller accepts when $S\le p$.  The mechanism's gains from trade are denoted
$\BO$.  The buyer's ex-ante profit is
\[
   \Pi_B:=\E_B\!\left[\max\!\left\{0,\sup_{p\in[0,1]}
             (B-p)\Prb[S\le p]\right\}\right].
\]
\end{definition}

\begin{definition}[Random-offerer mechanism and worst-case fraction]\label{def:RO}
The random-offerer mechanism selects each offering mechanism with probability one half:
\[
   \RO:=\frac{\SO+\BO}{2}.
\]
Its worst-case first-best fraction is
\[
   \rho_{\rm RO}:=\inf_{F_S,F_B:\,\FB>0}\frac{\RO}{\FB}.
\]
\end{definition}

The responder receives nonnegative utility whenever trade occurs.  This gives a
basic comparison used later.

\begin{lemma}[Gains from trade dominate proposer profit]\label{lem:gft-profit}
The offering mechanisms satisfy
\[
   \SO\ge\Pi_S,
   \qquad
   \BO\ge\Pi_B.
\]
\end{lemma}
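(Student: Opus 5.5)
We argue pointwise in the proposer's type and then take expectations; we give the seller-offering case, and the buyer-offering case is symmetric.

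Fix a seller type $s$ and let $p(s)$ be the price the seller posts in the seller-offering mechanism, chosen by the stated smallest-maximizer rule. If no-trade is optimal or the supremum is nonpositive, the seller's profit contribution $\max\{0,\sup_p(p-s)\Prb[B\ge p]\}$ equals $0$. In that case the gains-from-trade contribution is $\E_B[(B-s)\mathbf 1\{B\ge p(s)\}]$ when a price is posted, or $0$ when none is. We must check that this contribution is nonnegative.

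If a price $p(s)$ is actually posted, then trade occurs exactly when $B\ge p(s)$, so the contribution to gains from trade is
\[
\E_B\bigl[(B-s)\mathbf 1\{B\ge p(s)\}\bigr]
=\E_B\bigl[(B-p(s))\mathbf 1\{B\ge p(s)\}\bigr]+(p(s)-s)\Prb[B\ge p(s)] .
\]
The first term is the accepting buyer's utility, which is nonnegative because the buyer accepts only when $B\ge p(s)$. The second term is the seller's optimal profit at type $s$.

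The one point needing care is whether this optimal profit is nonnegative, so that it equals $\max\{0,\cdot\}$. It is, because the seller can always post $p=1\ge s$ (or choose no trade) and obtain profit at least $0$. An optimizing seller therefore never posts a price with negative profit, and in particular never posts $p<s$ with positive acceptance probability when a nonnegative option exists. It follows that the optimal profit is nonnegative and the conditional gains from trade at type $s$ are at least $\max\{0,\sup_p(p-s)\Prb[B\ge p]\}$.

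Taking the expectation over $S$ gives $\SO\ge\Pi_S$; measurability is not an issue because we are comparing integrands pointwise. The buyer-offering case follows in the same way: the conditional gains from trade equal the buyer's profit $(b-p)\Prb[S\le p]$ plus the seller's surplus $\E[(p-S)\mathbf 1\{S\le p\}]\ge0$, which gives $\BO\ge\Pi_B$.
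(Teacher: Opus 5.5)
Your proof is correct and is the paper's argument. The paper also splits each accepted trade as $B-S=(p-S)+(B-p)$, drops the responder's nonnegative share, and takes expectations. Your added check that the optimal profit is nonnegative, so that it matches the $\max\{0,\cdot\}$ in $\Pi_S$, is a detail the paper leaves implicit.
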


\begin{proof}
At a seller price $p$, every accepted trade satisfies
$B-S=(p-S)+(B-p)\ge p-S$.  At a buyer offer $p$, every accepted trade satisfies
$B-S=(B-p)+(p-S)\ge B-p$.  Taking expectations proves both inequalities.
\end{proof}

\subsection{Monotone allocations and smooth virtual values}

\begin{definition}[Pointwise monotonicity]\label{def:monotonicity}
Let $\Rtwo$ be the set of measurable allocations $x:\mathbb R^2\to[0,1]$ such
that $x(s,b)$ is nondecreasing in $b$ for every $s$ and nonincreasing in $s$
for every $b$.
\end{definition}

For the fixed-multiplier argument, first suppose that both distributions have
strictly positive continuously differentiable densities on their support
intervals.  Their ordinary virtual value and virtual cost are
\begin{equation}\label{eq:ordinary-virtual}
 \phi_B(b):=b-\frac{1-F_B(b)}{f_B(b)},
 \qquad
 \psi_S(s):=s+\frac{F_S(s)}{f_S(s)}.
\end{equation}

\begin{definition}[Budget-balance functional]\label{def:discriminant}
For $x\in\Rtwo$, define
\begin{equation}\label{eq:Lambda}
 \Lambda(x):=\E[(\phi_B(B)-\psi_S(S))x(S,B)].
\end{equation}
\end{definition}

The envelope formula identifies $\Lambda(x)$ as the budget-balance functional:
$\Lambda(x)\ge0$ is the expected-payment condition
for weak budget balance under envelope-normalized transfers; see Myerson and
Satterthwaite~\cite{MS83}.  For
$\alpha\ge0$, define
\begin{align}
 \phi_B^\alpha(b)&:=(1+\alpha)b
     -\alpha\frac{1-F_B(b)}{f_B(b)},\notag\\
 \psi_S^\alpha(s)&:=(1+\alpha)s
     +\alpha\frac{F_S(s)}{f_S(s)}.
 \label{eq:alpha-virtual}
\end{align}
Then
\begin{equation}\label{eq:Lalpha-virtual}
 \GFT(x)+\alpha\Lambda(x)
 =\E[(\phi_B^\alpha(B)-\psi_S^\alpha(S))x(S,B)].
\end{equation}

\begin{definition}[Fixed-multiplier value]\label{def:Valpha}
For smooth priors and $\alpha\ge0$, set
\[
 V_\alpha:=\sup_{x\in\Rtwo}
 \{\GFT(x)+\alpha\Lambda(x)\}.
\]
\end{definition}

\begin{theorem}[Fixed-multiplier bound]\label{thm:fixed-alpha}
For every $\alpha>0$ and every pair of independent value distributions with
strictly positive continuously differentiable densities on a common compact
interval,
\[
 V_\alpha\ge\beta(\alpha)\,\FB.
\]
In particular, $C(1)=\pi/2$ and $V_1\ge(2/\pi)\FB$.
\end{theorem}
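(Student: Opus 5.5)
The plan is to follow the lattice route of Liu, Qin, Ren, and Wang~\cite{LQRW26}, keeping the multiplier $\alpha$ free, and to read off the constant as $\beta(\alpha)=1/C(\alpha)$ from a one-dimensional Beta-type integral. The first step rewrites \eqref{eq:Lalpha-virtual} in quantile coordinates. Put $u=F_S(s)$ and $v=1-F_B(b)$, and write $s(u)$, $b(v)$ for the inverse CDFs. The identity
\[
  \frac{d}{du}\bigl(u^{1+1/\alpha}s(u)\bigr)=\frac{u^{1/\alpha}}{\alpha}\bigl[(1+\alpha)s(u)+\alpha u/f_S(s(u))\bigr]
\]
shows that $\psi_S^\alpha(s(u))=\alpha u^{-1/\alpha}\tfrac{d}{du}\bigl(u^{1+1/\alpha}s(u)\bigr)$. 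Symmetrically, $\phi_B^\alpha(b(v))$ is a weighted derivative of $\bigl(1-(1-v)^{\dots}\bigr)$-type revenue curves built from $b(v)$. Consequently, for an allocation that is an upper-left threshold set $\{(u,v): v\le g(u)\}$ with $g$ nondecreasing (which lies in $\Rtwo$), integration by parts turns $\GFT(x)+\alpha\Lambda(x)$ into boundary integrals of $s(u)$ and $b(v)$ against power weights $u^{1/\alpha}$, $v^{1/\alpha}$ along the curve $g$. Carrying this out on the consecutive lattice first, as in \cite{LQRW26}, keeps the manipulations discrete and exact.

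The second step is the construction: a randomized mixture of threshold allocations, namely a greedy family indexed by a scale $t$. I would choose the boundary curves to be level sets $\{u^{p}+v^{p}=t^{p}\}$ with $p=1+\alpha$, averaged over $t$ against the first-best weight. The constant is then pinned down by the area of the unit $\ell^p$ quarter-ball,
\[
  \int_0^1(1-u^p)^{1/p}\,du=\frac{\Gamma(1/p)^2}{2p\,\Gamma(2/p)},
\]
and twice this equals $C(\alpha)$; at $\alpha=1$ the quarter disc has area $\pi/4$, giving $C(1)=\pi/2$. The target is the \emph{greedy inequality}: for each lattice pair $(s_i,b_j)$ with $b_j>s_i$, the mixture allocation contributes at least $\beta\,(b_j-s_i)$ times the first-best mass after the Abel summation, whenever $\beta C(\alpha)\le1$. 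Summing over pairs gives $V_\alpha\ge\beta\,\FB$ on the lattice.

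The third and fourth steps remove the auxiliary assumptions. Step three: if the seller is irregular, replace $\psi_S^\alpha$ by its ironed version, i.e.\ take the convex hull of $u\mapsto u^{1+1/\alpha}s(u)$. On the lattice, allocations constant on ironed intervals lose nothing in the Lagrangian. Since the greedy bound only uses $s(u)$ through the concavified curve, and concavification only raises the curve, the optimized value is not decreased. The buyer side is handled symmetrically. Step four: approximate the smooth priors by lattice priors that preserve this (ironed) structure. Under that approximation $\FB$ converges and $V_\alpha$ is lower semicontinuous, because $\Rtwo$ is compact in the weak-$*$ sense and the virtual values converge in $L^1$ for $C^1$ positive densities. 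This transfers the bound to the stated class.

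The main obstacle is the greedy inequality in step two. One must show that a single curve family simultaneously dominates every first-best pair after the power-weighted integration by parts, and that the worst case is exactly the $\ell^{1+\alpha}$ area constant rather than something larger. I would first verify it for two-point distributions, i.e.\ a single pair $(s,b)$, where everything reduces to the explicit Beta integral. I would then argue by linearity in the first-best measure, checking that the cross terms produced by summation by parts carry the right sign because $g$ is monotone.
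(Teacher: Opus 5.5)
Your outline (lattice bound, then ironing, then a limit) matches the paper's architecture. The gap is step two. That step carries the whole content of the theorem, you do not supply it, and the route you sketch for it cannot work.

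The identity $2\int_0^1(1-u^p)^{1/p}\,du=q\BetaFn(q,q)$ is correct. However, nothing in the proposal links $V_\alpha/\FB$ to that area. The mixing measure ``against the first-best weight'' is never defined. The balls $\{u^p+v^p\le t^p\}$ also do not adapt to the first-best region, which can be any down-set in $(u,v)$.

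The plan to check two-point distributions and then extend by linearity fails, for two reasons.
\begin{itemize}
\item For a single pair the bound is trivial: $x\equiv1$ already gives $(1+\alpha)(b-s)$.
\item There is no linearity to extend it with. For fixed $x$ the objective \eqref{eq:lattice-L} is bilinear in $(f_S,f_B)$, so $V_\alpha$ is a supremum of linear functionals and hence convex in the buyer prior. Point-mass bounds achieved by different allocations therefore do not combine into a bound for mixtures. Your quantile coordinates, and with them the level sets, also change under mixing.
\end{itemize}

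What is missing is a mechanism that makes one allocation work for all buyer values at once. The paper obtains it from Sion's theorem \eqref{eq:sion-explicit}, which exchanges $\inf_{F_B}$ and $\sup_x$. One then needs a single allocation, depending only on the seller, with $H(b;x)\ge0$ in \eqref{eq:H-local} for every $b$ simultaneously. The paper uses the greedy column-by-column allocation of Liu et al.\ and proves it never fails for an $\alpha$-weakly regular seller when $\beta C(\alpha)\le1$. The proof goes through threshold-one feasibility, the two truncation propositions, and a minimal-counterexample argument. The constant $C(\alpha)$ enters there as $\sum_{j\ge0}\frac{q}{j+q}\frac{(a)_j}{j!}=q\BetaFn(q,q)$, evaluated on constant-virtual-cost sellers while the greedy still trades only with the lowest seller type. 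It does not enter as an area.

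A pair-by-pair inequality is also far stronger than what is true, at least in the lattice decomposition underlying \eqref{eq:H-local}. Take $\alpha=1$ and both priors with full support on $\{0,\dots,N\}$. Requiring every bracket in \eqref{eq:H-local} to be nonnegative forces, by induction on the gap, $x\ge\frac\beta2\sum_{j\le d}\frac1j$ on every pair with $b-s=d$. This exceeds one once $N$ is large.

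Steps three and four also need repair.
\begin{itemize}
\item \emph{Ironing.} The convex minorant of $N(u)=u^{1+1/\alpha}s(u)$ does not iron $\psi^\alpha_S$. Since $N'=\frac1\alpha u^{1/\alpha}\psi^\alpha_S$, convexity of $N$ only makes $u^{1/\alpha}\psi^\alpha_S$ monotone. On a linear piece of the minorant, the new virtual cost is proportional to $u^{-1/\alpha}$, hence decreasing. Moreover, the row-averaging argument that preserves the Lagrangian needs $\int_I(\psi_\alpha-\overline\psi)\,du=0$ with Lebesgue weight, which a minorant of $N$ does not provide. The paper instead irons $\Phi(u)=\int_0^u\psi_\alpha=M(u)+\alpha uc(u)$ and rebuilds a seller from $(1+\alpha)\overline c+\alpha u\overline c'=\overline\psi$.
\item \emph{Direction of the first-best comparison.} Your justification that the hull ``only raises the curve'' is backwards, since a convex minorant lies below. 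What must be shown is that the ironed seller has weakly larger first best; the paper gets this from $\overline M\le M$.
\item \emph{Semicontinuity.} The limit step needs $\limsup_h V_{\alpha,h}\le V_\alpha$, i.e.\ upper rather than lower semicontinuity. The paper gets it from $\|w_h-w\|_1\to0$, which is uniform over $0\le x\le1$.
\item \emph{Regular discretization.} The lattice sellers must themselves be $\alpha$-weakly regular, which naive discretizations are not. The recursion \eqref{eq:lattice-Euler-recursion} is chosen so that the lattice virtual costs equal $\widehat\psi(s_i)$ exactly. A mollification step is then needed to pass from the ironed $\overline\psi$ to smooth, strictly regular sellers.
\end{itemize}
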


\section{A fixed-multiplier bound}\label{sec:fixed}

For $\alpha>0$, set
\begin{equation}\label{eq:aq}
 q:=\frac1{1+\alpha},
 \qquad
 a:=\frac{\alpha}{1+\alpha}=1-q.
\end{equation}
The constant in \cref{eq:C-beta-intro} is
\begin{equation}\label{eq:C-beta}
 C(\alpha)=q\BetaFn(q,q)
 =q\frac{\Gamma(q)^2}{\Gamma(2q)},
 \qquad
 \beta(\alpha)=\frac1{q\BetaFn(q,q)}.
\end{equation}

\subsection{The consecutive-lattice bound}

We first work on the unit lattice.  Let the seller have positive mass
$f_S(s)$ at every $s\in\{0,1,\ldots,W\}$ and put $F_S(-1)=0$.  Its lattice
$\alpha$-virtual cost is
\begin{equation}\label{eq:lattice-virtual}
 \psi_S^\alpha(s):=(1+\alpha)s
   +\alpha\frac{F_S(s-1)}{f_S(s)}.
\end{equation}
The seller is \emph{$\alpha$-weakly regular} when these costs are
nondecreasing.  Buyer values range over the nonnegative integers; the local
formulation below allows zero buyer masses and therefore covers every finite
buyer distribution.

On the unit lattice, write the fixed-multiplier objective directly in terms of
probability masses:
\begin{align}
 L_\alpha(x;F_S,F_B)
 &:=(1+\alpha)\sum_{s,b}f_S(s)f_B(b)(b-s)x(s,b)\notag\\
 &\quad-\alpha\sum_{s,b}f_S(s)(1-F_B(b))x(s,b)
 -\alpha\sum_{s,b}f_B(b)F_S(s-1)x(s,b).
 \label{eq:lattice-L}
\end{align}
It equals $\GFT+\alpha\Lambda$ when all masses are positive and is also defined
when some buyer masses vanish.

As in Liu, Qin, Ren, and Wang~\cite{LQRW26}, an allocation is defined on the
whole lattice and is pointwise monotone.  Entries with $b\le s$ can be set to
zero without lowering \cref{eq:lattice-L}.  Indeed, the coefficient of such an
entry is
\[
 f_S(s)\bigl((1+\alpha)f_B(b)(b-s)
       -\alpha(1-F_B(b))\bigr)
 -\alpha f_B(b)F_S(s-1)\le0.
\]
Fix a finite buyer truncation $\{0,\ldots,M\}$.  Let $\Delta_M$ be the buyer
probability simplex and let $\Rtwo^{W,M}$ be the finite pointwise-monotone
allocation polytope.  For
\[
 g(x,F_B):=L_\alpha(x;F_S,F_B)-\beta\FB(F_S,F_B),
\]
bilinearity and compactness allow Sion's minimax theorem~\cite{Sion58} to be
applied in the explicit form
\begin{equation}\label{eq:sion-explicit}
 \inf_{F_B\in\Delta_M}\sup_{x\in\Rtwo^{W,M}}g(x,F_B)
 =\sup_{x\in\Rtwo^{W,M}}\inf_{F_B\in\Delta_M}g(x,F_B).
\end{equation}
Thus it suffices to construct one allocation satisfying
\begin{equation}\label{eq:fixed-maximin}
 L_\alpha(x;F_S,F_B)-\beta\FB\ge0
 \quad\text{for every buyer distribution on $\{0,\ldots,M\}$}.
\end{equation}
After setting $x(s,b)=0$ for $b\le s$, the objective difference is
$g(x,F_B)=\sum_{b=0}^M f_B(b)H(b;x)$, where
\begin{equation}\label{eq:H-local}
 H(b;x):=
 \sum_{0\le s<b} f_S(s)
 \left[
 (b-s)((1+\alpha)x(s,b)-\beta)
 -\alpha\!\sum_{s<s'<b}x(s',b)
 -\alpha\!\sum_{s<b'<b}x(s,b')
 \right].
\end{equation}
The sums over intermediate lattice points are the discrete envelope rents.
Because a linear function on $\Delta_M$ attains its minimum at a vertex, the
inner infimum in \cref{eq:sion-explicit} equals
$\min_{0\le b\le M}H(b;x)$.  Thus \cref{eq:fixed-maximin} is equivalent to
$H(b;x)\ge0$ for every $b\le M$.  The construction and its no-failure proof
are uniform in $M$, so the same inequalities hold for every finite lattice
buyer distribution.

The greedy construction of Liu et al.~\cite[Theorem~4.6]{LQRW26} processes
buyer values increasingly, copies the preceding column, and raises the current
column from low seller types upward until $H(b;x)=0$.  Their
maximin--greedy equivalence is stated for arbitrary $\beta\ge0$.

\begin{proposition}[Parameterized greedy bound]\label{prop:capacity}
Fix $\alpha>0$ and a full-support $\alpha$-weakly regular seller distribution
on a consecutive unit lattice.  If $0<\beta C(\alpha)\le1$, then the greedy
construction satisfies every inequality \cref{eq:H-local}.
\end{proposition}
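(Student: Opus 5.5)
We follow the structure of the greedy analysis of Liu et al.~\cite[Theorem~4.6]{LQRW26}, keeping $\alpha$ free. The greedy rule never changes columns $b'<b$ when column $b$ is processed. Column $b$ starts as a copy of column $b-1$ and is raised from low seller types upward. So the only way an inequality \cref{eq:H-local} can fail is that column $b$ is saturated up to the diagonal, $x(s,b)=1$ for all $s<b$, while $H(b;x)<0$ still holds. We therefore argue by contradiction. We fix the first buyer value $b^\ast$ at which greedy fails and derive a lower bound on $H(b^\ast;x)$ at the saturated column that is nonnegative whenever $\beta C(\alpha)\le1$.

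\textbf{Step 1: structure of the allocation.} Because the seller is $\alpha$-weakly regular, the lattice virtual costs \cref{eq:lattice-virtual} are nondecreasing. We then show, as in \cite{LQRW26}, that every processed column is a threshold column: there are cutoffs $\tau(b')$, nondecreasing in $b'$, with $x(s,b')=1$ for $s<\tau(b')$, at most one fractional entry, and zeros above. Pointwise monotonicity together with the equality $H(b';x)=0$ for the columns $b'<b^\ast$ then turns the family of equations into a recursion for the staircase $\tau$.

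\textbf{Step 2: bounding the saturated column.} We substitute $x(\cdot,b^\ast)\equiv1$ below the diagonal into \cref{eq:H-local}. The row rents $\alpha\sum_{s<b'<b^\ast}x(s,b')$ are controlled by the staircase. Using the equalities $H(b';x)=0$ for $b'<b^\ast$, we eliminate $f_S$-weighted sums and reduce $H(b^\ast;x)\ge0$ to an inequality on the seller CDF alone. The cleanest route is to pass to the worst case. Minimizing over seller distributions subject to weak regularity, we expect the extremal seller to have constant $\alpha$-virtual cost gaps. This means $F_S$ is a power law with exponent $q=1/(1+\alpha)$ in the continuum limit, since $\alpha F/f=\text{const}\cdot s$ gives $F\propto s^{1/\alpha}$ after the $(1+\alpha)s$ scaling. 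Along this extremal family the staircase satisfies a self-similar ODE.

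\textbf{Step 3: the constant.} For the power-law extremal seller, the ratio of the envelope-rent term to the first-best term becomes an integral of the form $\int_0^1 t^{q-1}(1-t)^{q-1}\,dt$, multiplied by $q$. This gives $C(\alpha)=q\BetaFn(q,q)$. The saturated-column inequality then reads $1-\beta C(\alpha)\ge0$, up to discretization terms that must be shown to have a favourable sign on the lattice. As a consistency check, $\alpha=1$ should reproduce the arcsine integral giving $\pi/2$.

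\textbf{Main obstacle.} We expect the main obstacle to be Step 2 on the discrete lattice. Reducing to the extremal seller needs a monotonicity or convexity argument showing that, among $\alpha$-weakly regular sellers, the saturated-column deficit is worst at the power law. It also needs the lattice sums to dominate their continuum Beta-integral counterparts, with no loss of constant. Our first attempt would be an exchange argument: adjust two adjacent seller masses while preserving regularity and show the deficit moves monotonically. Failing that, we would compare the sums term by term with the integrals, using convexity of $t\mapsto t^{q-1}$.
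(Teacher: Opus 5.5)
There is a genuine gap. Step~2 is where the entire proof lives, and what you describe there is a plan rather than an argument. The earlier columns are produced by the greedy with thresholds that move: threshold-zero runs, threshold-one runs, later jumps. Some columns are merely copied and are not tight, so you cannot assume $H(b';x)=0$ for all $b'<b^\ast$. Their rents enter $H(b^\ast;x)$ through $-\alpha\sum_{s<b'<b^\ast}x(s,b')$, and nothing in your sketch eliminates this history to leave an inequality in $F_S$ alone.

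The paper, following Liu et al., never attempts a one-shot worst-case bound at the failing column. It takes a failing inverse-hazard ratio vector $(r_0,r_1,\dots,r_m)$ of minimal length and shows that deleting the first positive ratio $r_1$ still fails. This uses three ingredients:
\begin{enumerate}[label=(\roman*)]
\item state sufficiency, and propagation of failure to more demanding clear states;
\item a threshold-stationary truncation driven by the long--short residual $\mathcal D_\beta=J_\beta+1-\beta(k+2)\prod_{j=2}^{k+2}(1-q/j)>0$;
\item a nonstationary truncation driven by the propagated gap $Q_b>0$, in which the affine $\beta$-terms cancel.
\end{enumerate}
An explicit extremal computation occurs only where the state obeys the closed recurrence $x_b=\beta G_b$.

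Your description of failure is also incomplete. The coefficient of $x(s,b)$ in $H(b;x)$ is $f_S(s)\bigl((1+\alpha)b-\psi_S^\alpha(s)\bigr)$. So the greedy can fail the virtual-cost test before the column reaches the diagonal, and raising entries past that cutoff only lowers $H$. The fully saturated column is therefore not the configuration you need to bound.

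Your guess for the extremal seller is also wrong, so Step~3 would not yield $C(\alpha)$ even heuristically. In both places where the constant enters, the worst case is a seller with \emph{constant} $\alpha$-virtual cost, not a power law.
\begin{itemize}
\item At threshold one, a virtual-cost failure forces $r_m\ge\kappa(b-m)$ with $\kappa=1/a$, i.e.\ $\psi_S^\alpha(m)\ge(1+\alpha)b$. Hence $F_S(i)/F_S(0)\le\prod_{m=1}^{i}\bigl(1+a/(b-m)\bigr)$.
\item In the truncation step, adjacent pooling of the $\varphi_i$ preserves $K_\beta$ and strictly lowers $J_\beta$; this is the working version of your exchange idea. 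It forces $\varphi_0=\cdots=\varphi_k$.
\end{itemize}
In the continuum such a seller has CDF proportional to $(1-s/s_0)^{-a}$ with $a=1-q$. The constant $C(\alpha)$ arises by integrating the binomial series $(1-t)^{-a}=\sum_j c_jt^j$ against the greedy weight $\frac qt\prod_{j=t+1}^{b}(1-q/j)\approx\frac qb(t/b)^{q-1}$. This gives $\sum_j\frac{q}{j+q}c_j=q\BetaFn(q,q)$.

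A power-law seller has a strictly increasing linear virtual cost and produces no factor $(1-t)^{q-1}$. Your own exponent bookkeeping ($q$ versus $s^{1/\alpha}$) is also inconsistent.

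Finally, the discrete comparison is not a matter of discretization terms with a favourable sign. The paper proves exact strict lattice inequalities $G_b\le\widetilde G_b^{(b+1)}<C(\alpha)$ and $S_k<C(\alpha)$, the latter via the tail bound $\frac{q}{j+q}c_j>c_j-c_{j+1}$.
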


\begin{proof}
\Cref{app:parameter-audit} verifies the four steps of Liu et
al.~\cite[Proposition~5.4--Theorem~5.10]{LQRW26} with symbolic $\beta$:
threshold-one feasibility, threshold-stationary truncation, nonstationary
truncation, and the minimal-counterexample argument.
\end{proof}

\begin{corollary}[Regular lattice bound]\label{cor:lattice-bound}
For a full-support $\alpha$-weakly regular lattice seller and every
finite-support lattice buyer distribution,
\[
 \sup_{x\in\Rtwo}L_\alpha(x;F_S,F_B)
 \ge\beta(\alpha)\FB.
\]
\end{corollary}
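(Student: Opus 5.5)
The corollary follows by combining \cref{prop:capacity} with the minimax reduction in \cref{eq:sion-explicit}, taking $\beta=\beta(\alpha)=1/C(\alpha)$. This choice meets the hypothesis $0<\beta C(\alpha)\le 1$ with equality.

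Fix the seller distribution on $\{0,\ldots,W\}$. Fix a finite-support lattice buyer distribution $F_B$, and choose $M$ at least as large as its maximal support point, so that $F_B\in\Delta_M$.

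\cref{prop:capacity} says that the greedy allocation $x^{\rm gr}$, built column by column for $b=0,1,\ldots,M$ as described above, satisfies $H(b;x^{\rm gr})\ge0$ for every $b\le M$. This allocation lies in $\Rtwo^{W,M}$, since greedy copies the preceding column and raises entries from low seller types upward. It vanishes for $b\le s$.

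We then use the decomposition $g(x,F_B)=\sum_{b=0}^M f_B(b)H(b;x)$, which holds for allocations vanishing on $b\le s$. Since every $f_B(b)\ge0$, it gives
\[
L_\alpha(x^{\rm gr};F_S,F_B)-\beta(\alpha)\FB(F_S,F_B)=\sum_{b=0}^M f_B(b)H(b;x^{\rm gr})\ge0.
\]
Here no minimax theorem is actually needed. The greedy allocation does not depend on $F_B$ beyond the truncation level, so \cref{eq:fixed-maximin} holds directly, and \cref{eq:sion-explicit} only explains why this is also necessary.

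It remains to pass from $\Rtwo^{W,M}$ to $\Rtwo$. Extend $x^{\rm gr}$ to all of the relevant lattice, for instance by setting $x(s,b)=x^{\rm gr}(s,M)$ for $b>M$. Then define $x$ off the lattice by any monotone extension, such as the step extension $x(s,b)=x^{\rm gr}(\lceil s\rceil,\lfloor b\rfloor)$ with values clamped to the grid range. These extensions preserve pointwise monotonicity, and $L_\alpha$ only evaluates $x$ at support points, where $f_B$ vanishes for $b>M$. Hence the objective is unchanged, and $\sup_{x\in\Rtwo}L_\alpha\ge L_\alpha(x^{\rm gr})\ge\beta(\alpha)\FB$.

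The only delicate point is checking that the decomposition into $H(b;x)$ matches \cref{eq:lattice-L} exactly when some buyer masses are zero. This is the discrete envelope computation: expand $1-F_B(b)=\sum_{b'>b}f_B(b')$ and $F_S(s-1)=\sum_{s'<s}f_S(s')$, then swap summations. The computation is stated before the proposition, so we take it as given.
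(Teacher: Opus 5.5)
Your proposal is correct and follows the paper's one-line proof. You apply \cref{prop:capacity} with $\beta=1/C(\alpha)$ and use the sufficiency direction of the maximin--greedy equivalence, which you spell out through the decomposition $g(x,F_B)=\sum_b f_B(b)H(b;x)$ and a monotone step extension from the lattice to $\Rtwo$. Your remark that Sion's theorem is not needed here is also right, because $\sup_x\inf_{F_B}g\le\inf_{F_B}\sup_x g$ holds trivially.
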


\begin{proof}
Apply \cref{prop:capacity} with $\beta=1/C(\alpha)$ and use the
maximin--greedy equivalence.
\end{proof}

\subsection{Quantile ironing at a fixed multiplier}

Let $c(u)=F_S^{-1}(u)$ be the seller's quantile function, $0<u<1$, and write
\begin{equation}\label{eq:quantile-virtual}
 \psi_\alpha(u):=(1+\alpha)c(u)+\alpha u c'(u),
 \qquad
 \Phi(u):=\int_0^u\psi_\alpha(v)\,dv.
\end{equation}
The identity
\begin{equation}\label{eq:Phi-M}
 \Phi(u)=M(u)+\alpha u c(u),
 \qquad M(u):=\int_0^u c(v)\,dv,
\end{equation}
follows by integration by parts.

For a bounded seller coefficient $\chi:(0,1)\to\mathbb R$ and a smooth
buyer prior, let $\mathcal X$ be the set of measurable
$X:(0,1)\times\mathbb R\to[0,1]$ that are nonincreasing in the first
coordinate and nondecreasing in the second.  Define the quantile objective
and its value by
\begin{align}
 \mathcal L_\alpha(\chi;X)
 &:=\int_0^1\E_B\!\left[
       \bigl(\phi_B^\alpha(B)-\chi(u)\bigr)X(u,B)\right]du,
 \label{eq:quantile-objective}\\
 \mathcal V_\alpha(\chi,F_B)
 &:=\sup_{X\in\mathcal X}\mathcal L_\alpha(\chi;X).
 \label{eq:quantile-value}
\end{align}
When $c$ is strictly increasing and $\chi=\psi_\alpha$, the change of
variables $u=F_S(s)$ gives
$\mathcal V_\alpha(\psi_\alpha,F_B)=V_\alpha$.

Let $\overline\Phi$ be the greatest convex minorant of $\Phi$, and let
$\overline\psi$ be the right-continuous version of its nondecreasing slope.
The zero function is a convex minorant of the nonnegative curve $\Phi$, so
$0\le\overline\Phi\le\Phi$ and $\overline\Phi(0)=\Phi(0)=0$.  The minorant
also contacts $\Phi$ at $1$: otherwise its endpoint value could be raised
slightly without violating either convexity or the minorant property.  Since
$\Phi(u)=o(u)$ as $u\downarrow0$, convexity gives
$\overline\psi(0+)=0$.  Define
\begin{equation}\label{eq:continuous-ironed-c}
 \overline c(u):=
 \frac1\alpha u^{-(1+\alpha)/\alpha}
 \int_0^u v^{1/\alpha}\overline\psi(v)\,dv.
\end{equation}
Because a convex nonnegative function starting at zero has nonnegative
right derivative, $\overline\psi\ge0$.
Then
\begin{equation}\label{eq:continuous-ironed-ode}
 (1+\alpha)\overline c(u)+\alpha u\overline c'(u)
 =\overline\psi(u).
\end{equation}
More precisely, $(1+\alpha)\overline c(u)$ is a weighted average of
$\overline\psi(v)$ over $0<v<u$.  Since $\overline\psi$ is nondecreasing,
$(1+\alpha)\overline c(u)\le\overline\psi(u)$.  Equation
\eqref{eq:continuous-ironed-ode} then gives $\overline c'(u)\ge0$ almost
everywhere, so $\overline c$ is a seller quantile function after choosing its
left-continuous version.

\begin{lemma}[Fixed-multiplier quantile ironing]\label{lem:fixed-ironing}
For every smooth buyer prior,
\[
 \mathcal V_\alpha(\psi_\alpha,F_B)
 =\mathcal V_\alpha(\overline\psi,F_B),
 \qquad
 \FB(\overline c,F_B)\ge\FB(c,F_B).
\]
\end{lemma}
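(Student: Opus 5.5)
The plan follows Myerson's standard ironing argument, carried out in the seller quantile, and then proves the first-best comparison separately.

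\emph{Step 1: integrate by parts in $u$.} Fix $X\in\mathcal X$ and a buyer value $b$. The function $u\mapsto X(u,b)$ is nonincreasing, so $-dX(\cdot,b)$ is a nonnegative measure on $(0,1)$. Integrating by parts gives
\[
 \int_0^1\chi(u)X(u,b)\,du
 =\Psi(1)X(1^-,b)+\int_0^1\Psi(u)\,\bigl(-dX(u,b)\bigr),
 \qquad \Psi(u):=\int_0^u\chi .
\]
For $\chi=\psi_\alpha$ we have $\Psi=\Phi$; for $\chi=\overline\psi$ we have $\Psi=\overline\Phi$.

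\emph{Step 2: the inequality $\le$.} Since $\overline\Phi\le\Phi$ and $\overline\Phi(1)=\Phi(1)$, the identity above yields $\int\overline\psi X\le\int\psi_\alpha X$ for every $X$. Hence $\mathcal L_\alpha(\psi_\alpha;X)\le\mathcal L_\alpha(\overline\psi;X)$ for all $X$, and therefore $\mathcal V_\alpha(\psi_\alpha,F_B)\le\mathcal V_\alpha(\overline\psi,F_B)$.

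\emph{Step 3: the inequality $\ge$.} Here I would show that the supremum for $\overline\psi$ can be restricted to allocations $X$ that are constant in $u$ on each ironing interval $I$, that is, on each component of $\{\overline\Phi<\Phi\}$.
\begin{itemize}
\item \emph{Averaging.} Given any $X$, replace $X(\cdot,b)$ on each $I$ by its average over $I$. Monotonicity in $u$ survives because averaging a nonincreasing function over consecutive intervals stays nonincreasing. Monotonicity in $b$ survives because averaging is linear and order preserving.
\item \emph{Value unchanged.} The slope $\overline\psi$ is constant on $I$, so the $\overline\psi$-part of the objective is unchanged. The $\phi^\alpha_B$-part depends only on $\int_0^1 X(u,B)\,du$, which is also unchanged.
\item \emph{Equality of the two objectives.} For such an $X$, the measure $-dX(\cdot,b)$ puts no mass inside ironing intervals. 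It is therefore supported where $\overline\Phi=\Phi$, and Step 1 gives equality of the two objectives.
\end{itemize}
The endpoint behaviour near $0$ and $1$ should be controlled using $\Phi(0)=\overline\Phi(0)=0$, the contact $\overline\Phi(1)=\Phi(1)$, boundedness of $c$, and $\Phi(u)=o(u)$.

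\emph{Step 4: first best.} By construction $\overline c$ is the quantile function whose $\alpha$-virtual cost is $\overline\psi$. Write
\[
 \FB(c,F_B)=\int_0^1\E_B[(B-c(u))_+]\,du .
\]
The map $y\mapsto\E_B(B-y)_+$ is convex and nonincreasing. So it suffices to show that $\overline c$ is dominated by $c$ in the increasing-convex sense. The cleanest sufficient condition, which I would try to establish, is $\int_0^u\overline c\le\int_0^u c$ for all $u$ with equality at $u=1$, i.e.\ $\overline c$ majorizes downward.

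Define $\overline M(u):=\int_0^u\overline c$. The ODE \eqref{eq:continuous-ironed-ode} integrates to $\overline\Phi(u)=\overline M(u)+\alpha u\overline c(u)$, in parallel with \eqref{eq:Phi-M}. Set $D:=M-\overline M$. Subtracting the two relations gives
\[
 \Phi-\overline\Phi=D+\alpha uD'=\alpha u^{1-1/\alpha}\bigl(u^{1/\alpha}D\bigr)'.
\]
Since the left side is nonnegative and $u^{1/\alpha}D\to0$ as $u\to0$, we get $D\ge0$, i.e.\ $\overline M\le M$.

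Equality at $u=1$ may fail, since $\overline c$ need not have the same mean. Even so, $\overline M\le M$ for all $u$ is weak submajorization from below. For convex nonincreasing test functions, Karamata/Tomić's theorem with $\overline M\le M$ gives
\[
 \int_0^1 h(\overline c(u))\,du\ \ge\ \int_0^1 h(c(u))\,du
 \qquad\text{for } h(y)=\E_B(B-y)_+ ,
\]
which requires both $c$ and $\overline c$ to be nondecreasing; they are. This yields $\FB(\overline c,F_B)\ge\FB(c,F_B)$.

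The main obstacle is Step 3: making the ironing-interval averaging rigorous for merely measurable $X$ and a possibly countable family of ironing intervals, and handling boundary terms in the integration by parts.
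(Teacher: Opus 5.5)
Your proposal is correct and follows the paper's proof: Stieltjes integration by parts gives $\mathcal V_\alpha(\psi_\alpha,F_B)\le\mathcal V_\alpha(\overline\psi,F_B)$, row-averaging over the ironing intervals gives the reverse inequality, and the ODE $M+\alpha uM'=\Phi$ gives $\overline M\le M$. The only difference is the last step: instead of invoking Karamata--Tomi\'c, the paper uses the identity $\int_0^1(b-c(u))_+\,du=\max_{0\le u\le1}\{bu-M(u)\}$, which is exactly your increasing-concave-order criterion specialized to the test functions $(b-y)_+$ and keeps the argument self-contained.
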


\begin{proof}
For $X\in\mathcal X$, let $y(u)=\E_B[X(u,B)]$, which is nonincreasing.  Put
$D=\Phi-\overline\Phi\ge0$.  Both curves agree at $0$ and $1$.  The
buyer term is unchanged when only the seller coefficient is ironed, and
Stieltjes integration by parts gives
\begin{align}
 \mathcal L_\alpha(\overline\psi;X)
   -\mathcal L_\alpha(\psi_\alpha;X)
 &=\int_0^1(\psi_\alpha-\overline\psi)y\,du\notag\\
 &=-\int_0^1D\,dy\ge0.
 \label{eq:ironing-one-way}
\end{align}
Let $\mathcal I$ be the countable collection of connected components of
$\{u:D(u)>0\}$.  On every $I=(\ell,r)\in\mathcal I$, the greatest convex
minorant is the chord joining the two contact points, and hence
$\overline\psi$ is constant.  Define $AX$ by replacing, for every $b$, the
rows on $I$ by their Lebesgue average over $I$, and leave all other rows
unchanged.  For a nonincreasing function, the average on an interval lies
between its one-sided endpoint values.  Thus these replacements preserve
monotonicity in $u$; averaging also preserves monotonicity in $b$.  They leave
$\int_0^1X(u,b)\,du$ unchanged for every $b$.  Therefore
\[
 \mathcal L_\alpha(\overline\psi;AX)
 =\mathcal L_\alpha(\overline\psi;X).
\]
Moreover, $AX$ is constant on each $I$ and
\[
 \int_I(\psi_\alpha-\overline\psi)\,du
 =D(r)-D(\ell)=0.
\]
On the complement of the ironing intervals, $D=0$ and hence
$D'=\psi_\alpha-\overline\psi=0$ almost everywhere.  It follows that
$\mathcal L_\alpha(\psi_\alpha;AX)
=\mathcal L_\alpha(\overline\psi;AX)$.  Consequently
$\mathcal V_\alpha(\overline\psi,F_B)
\le\mathcal V_\alpha(\psi_\alpha,F_B)$, while
\cref{eq:ironing-one-way} gives the reverse inequality.

For first best, define $\overline M(u)=\int_0^u\overline c(v)\,dv$.  The pairs
$(M,\Phi)$ and $(\overline M,\overline\Phi)$ satisfy
$M+\alpha uM'=\Phi$ and
$\overline M+\alpha u\overline M'=\overline\Phi$.  The solution with zero
initial value is
\[
 M(u)=\frac1\alpha u^{-1/\alpha}
       \int_0^u v^{1/\alpha-1}\Phi(v)\,dv.
\]
Thus $\overline\Phi\le\Phi$ implies $\overline M\le M$.  For every buyer
value $b$,
\[
 \int_0^1(b-c(u))_+\,du
 =\max_{0\le u\le1}\{bu-M(u)\}.
\]
Replacing $M$ by $\overline M$ weakly raises the maximum.  Averaging over $B$
proves the claim.
\end{proof}

\subsection{A regularity-preserving lattice limit}

The next lemma transfers the consecutive-lattice bound to the quantile
model while preserving regularity.

\begin{lemma}[Regular lattice approximation]\label{lem:regular-lattice-limit}
Fix $\alpha>0$.  For a bounded nonnegative nondecreasing function
$\psi:(0,1)\to\mathbb R$ with $\psi(0+)=0$, define
\begin{equation}\label{eq:T-alpha}
 (T_\alpha\psi)(u):=
 \frac1\alpha u^{-(1+\alpha)/\alpha}
 \int_0^u v^{1/\alpha}\psi(v)\,dv.
\end{equation}
Let $c=T_\alpha\psi$.  For every buyer prior having a strictly positive
continuously differentiable density on a compact interval beginning at zero,
\[
 \mathcal V_\alpha(\psi,F_B)
 \ge\beta(\alpha)\FB(c,F_B).
\]
\end{lemma}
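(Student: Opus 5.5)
We will discretize the seller on a fine consecutive lattice so that the lattice seller is $\alpha$-weakly regular, apply \cref{cor:lattice-bound}, and pass to the limit. Fix $n$ and let $h=1/n$ be the quantile mesh. Since $\psi$ is nondecreasing, bounded and nonnegative, we approximate it from above and below by step functions $\psi_n$ that are constant on the cells $((k-1)h,kh]$ and converge to $\psi$ in $L^1$. The lattice seller is then built by reversing the definition \cref{eq:lattice-virtual}. Choose a large scale $N$, so that seller values are $s\in\{0,\ldots,W\}$ with $W\approx N\sup\psi$. Assign masses $f_S(s)$ so that the lattice virtual cost $(1+\alpha)s+\alpha F_S(s-1)/f_S(s)$ tracks $N\psi(F_S(s))$. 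The recursion $f_S(s)=\alpha F_S(s-1)/(N\psi-(1+\alpha)s)$ is the discrete analogue of the ODE \eqref{eq:continuous-ironed-ode}. To get exact full support and monotone lattice virtual costs, we will define the lattice costs first as any nondecreasing sequence approximating $N\psi$ on the lattice. We then solve for the masses, which is always possible since $\alpha F_S(s-1)/f_S(s)$ may be any positive number, and normalize. Regularity then holds by construction, and the discrete quantile function, rescaled by $1/N$, converges to $T_\alpha\psi$ because the recursion is an Euler scheme for the linear ODE. The condition $\psi(0+)=0$ and the weighted-average form of $T_\alpha$ control the singular start at $u=0$.

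The buyer is discretized by rounding $B$ to the grid $\frac1N\mathbb Z$ with finite support, which is allowed by \cref{cor:lattice-bound}. Rescaling by $N$ turns the bound into
\[
 \sup_{x\in\Rtwo}L_\alpha(x;F_S^{(N)},F_B^{(N)})\ge\beta(\alpha)\FB(F_S^{(N)},F_B^{(N)}).
\]
The right side converges to $\beta(\alpha)\FB(c,F_B)$, since $\FB$ is continuous under weak convergence of bounded distributions. For the left side, we rewrite $L_\alpha$ in quantile form exactly as in \cref{eq:quantile-objective}. The seller coefficient at quantile $u$ is the lattice virtual cost of the cell containing $u$, which is close to $\psi(u)$. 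The buyer coefficient is the lattice analogue of $\phi_B^\alpha$; after summation by parts over $b$, it is expressed through $1-F_B$ and integrated against monotone columns. Any lattice allocation induces an $X\in\mathcal X$ that is piecewise constant in $(u,b)$. Hence $\limsup_N$ of the lattice value is at most $\mathcal V_\alpha(\psi,F_B)$, up to errors bounded by $\|\psi_N-\psi\|_{L^1}$ and by the buyer discretization error. Both errors vanish because $X$ takes values in $[0,1]$ and $F_B$ has a $C^1$ positive density.

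We expect the main obstacle to be this upper-semicontinuity step. The lattice buyer term has the form $(1+\alpha)b\,f_B(b)-\alpha(1-F_B(b))$, summed against $x$. Comparing it with $\int\phi_B^\alpha f_B\,X$ requires an Abel summation in $b$ that uses monotonicity of $X$ in $b$, so that the boundary terms are uniformly bounded. It also requires the density of $F_B$ to be bounded below so that the grid rounding is uniform. The singularity of $T_\alpha$ near $u=0$ is the second delicate point. There we use $\psi(0+)=0$ to show that the low quantiles contribute $o(1)$ to both sides.
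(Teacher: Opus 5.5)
The gap is in the seller discretization, which is the real content of this lemma. As you describe it, the construction does not produce a weakly regular lattice seller whose rescaled quantile function converges to $c=T_\alpha\psi$.

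First, a full-support lattice seller has $\psi_S^\alpha(s)>(1+\alpha)s$ for $s\ge1$. With $W\approx N\sup\psi$, the top costs therefore exceed roughly $(1+\alpha)N\sup\psi$ and cannot track $N\psi$. The top must instead satisfy $W/N\to c(1)=T_\alpha\psi(1)\le\sup\psi/(1+\alpha)$. This is not cosmetic. The ODE $\alpha uc'+(1+\alpha)c=\psi$ that your recursion discretizes has the solution family $T_\alpha\psi+Ku^{-(1+\alpha)/\alpha}$, and only the terminal value selects $K=0$. With $K>0$ one eventually has $(1+\alpha)c>\psi$, and the recursion has no positive masses. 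With $K<0$ one gets a spurious atom at zero, where the lattice cost is $0$ rather than $\psi$. That inflates the lattice Lagrangian in exactly the direction your $\limsup$ inequality cannot tolerate.

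For the same reason, ``Euler scheme for the linear ODE'' is not by itself a convergence argument: the ODE is singular at $u=0$ and its homogeneous modes blow up there. Finally, choosing the costs first to approximate $N\psi$ is circular. $\psi$ is indexed by quantile, and the quantile a lattice value occupies is determined by masses you have not yet chosen. Your step functions $\psi_n$ on fixed cells $((k-1)/n,k/n]$ do not resolve this, because the lattice's quantile cells are not those cells.

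The missing idea is to fix the target seller before discretizing, as the paper does. Let $F=c^{-1}$, put $s_i=ih$ with $h=c(1)/W$, and take the lattice costs to be exactly the target's value-space cost $\widehat\psi(s_i)=\psi(F(s_i))$. Regularity is then automatic, and $G_i=\prod_{j>i}(1+hf(s_j)/F(s_j))^{-1}$ is an explicit product. Its logarithm converges to $\log F(s_i)$ uniformly on $[\delta,c(1)]$ as a Riemann sum. The lattice mass below $\delta$ is at most $F(\delta)+o(1)$, with costs at most $\widehat\psi(\delta)$; this is where $\psi(0+)=0$ enters.

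The Riemann-sum step needs $f/F$ bounded on $[\delta,c(1)]$. The paper obtains this by first treating mollified $\psi_k$ with $\psi_k'\ge\varepsilon_k$, so that $c_k'\ge\varepsilon_k/(1+2\alpha)$ by \eqref{eq:ck-derivative-lower}. It then passes to $\psi$ through the $L^1$ contraction \eqref{eq:T-contraction} and the bound \eqref{eq:V-L1-continuity}. A one-shot discretization of a general $\psi$ can also be made to work. You would then have to prove that $\psi-(1+\alpha)c$ is bounded below on $[F(\delta),1]$, which follows from left limits of $\psi$ and lower semicontinuity. You would also have to handle the atom of $c$ at zero when $\psi$ vanishes near $0$.

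The rest of your plan is sound. You correctly use only the $\limsup$ direction, and $\FB$ passes to the limit by weak convergence. The buyer side needs neither Abel summation nor a density lower bound: the spread buyer coefficient converges to $(1+\alpha)bf_B(b)-\alpha\Prb[B>b]$ in $L^1$, and $0\le X\le1$ gives the comparison directly.
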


\begin{proof}
We begin with a seller having a positive continuously differentiable density
$f$ on $[0,R]$ and a continuous, strictly increasing value-space
$\alpha$-virtual cost
\[
 \widehat\psi(s):=(1+\alpha)s+\alpha\frac{F(s)}{f(s)}.
\]
Write $c_F:=F^{-1}$ for this seller's quantile function.
For $W\ge1$, put $h=R/W$ and $s_i=ih$.  With
$q_i=f(s_i)/F(s_i)$ for $i\ge1$, define
\begin{equation}\label{eq:lattice-Euler-recursion}
 G_W=1,
 \qquad
 G_{i-1}:=\frac{G_i}{1+hq_i},
 \qquad
 g_i:=G_i-G_{i-1},
 \qquad
 g_0:=G_0.
\end{equation}
Set $G_{-1}:=0$.
The $g_i$ are positive and sum to one.  The original-scale lattice virtual
cost at $s_i$ is
\[
 \widehat\psi_{h,i}:=(1+\alpha)s_i
        +\alpha h\frac{G_{i-1}}{g_i}.
\]
For $i\ge1$, \cref{eq:lattice-Euler-recursion} gives
$G_{i-1}/g_i=F(s_i)/(h f(s_i))$, and hence
\begin{equation}\label{eq:exact-lattice-cost}
 \widehat\psi_{h,i}=\widehat\psi(s_i).
\end{equation}
The equality also holds at $i=0$, where both sides are zero.  Thus the seller
has full support and is $\alpha$-weakly regular.  Dividing all values by $h$
produces the consecutive unit-lattice seller in
\cref{cor:lattice-bound}; both the fixed-multiplier objective and first best
scale by $1/h$.

We next prove the convergence needed for the optimized objective.  Iterating
\cref{eq:lattice-Euler-recursion} gives
\begin{equation}\label{eq:lattice-product}
 G_i=\prod_{j=i+1}^{W}\left(1+h\frac{f(s_j)}{F(s_j)}\right)^{-1}.
\end{equation}
For every $\delta>0$, $f/F$ is bounded and continuously differentiable on
$[\delta,R]$.  Uniformly over $s_i\ge\delta$,
\begin{align*}
 \log G_i
 &=-\sum_{j=i+1}^{W}\log\left(1+h\frac{f(s_j)}{F(s_j)}\right)\\
 &=-\sum_{j=i+1}^{W}h\frac{f(s_j)}{F(s_j)}+O(h)
 \longrightarrow-\int_{s_i}^{R}\frac{f(t)}{F(t)}\,dt
 =\log F(s_i).
\end{align*}
Hence $G_i/F(s_i)\to1$ uniformly away from zero.  Moreover,
\begin{equation}\label{eq:lattice-density-ratio}
 \frac{g_i}{h}
 =\frac{G_i}{F(s_i)}
   \frac{f(s_i)}{1+h f(s_i)/F(s_i)}
 \longrightarrow f(s_i)
\end{equation}
uniformly on the same region.  The lattice mass below $\delta$ converges to at
most $F(\delta)$.  More precisely, if $i_h=\max\{i:s_i\le\delta\}$, then
$G_{i_h}\to F(\delta)$ at every continuity point of $F$; here every point is a
continuity point.  Thus the mass of the cells meeting $[0,\delta]$ is
$F(\delta)+o(1)$, which vanishes after first taking $h\downarrow0$ and then
$\delta\downarrow0$.

Let $I_i=[s_i,s_i+h)$, extend continuous seller densities by zero beyond
$R$, and spread each lattice weight uniformly over its cell:
\begin{equation}\label{eq:spread-seller}
 f_{S,h}(s):=\sum_{i=0}^{W}\frac{g_i}{h}\mathbf1_{I_i}(s),
 \qquad
 a_{S,h}(s):=\sum_{i=0}^{W}
       \frac{g_i\widehat\psi_{h,i}}{h}\mathbf1_{I_i}(s).
\end{equation}
Put
\[
 a_S(s):=f(s)\widehat\psi(s)
        =(1+\alpha)s f(s)+\alpha F(s)
\]
on $[0,R]$, with zero extension outside.  The uniform convergence in
\cref{eq:lattice-density-ratio}, the vanishing initial mass, and the $O(h)$
mass in the cell extending beyond $R$ imply
\begin{equation}\label{eq:seller-L1}
 \|f_{S,h}-f\|_1\longrightarrow0,
 \qquad
 \|a_{S,h}-a_S\|_1\longrightarrow0.
\end{equation}

Discretize the buyer upward on the same grid: $B_h=h\lceil B/h\rceil$ for
$B>0$, and set the zero column to zero.  Write $b_j=jh$,
$p_{j,h}=\Prb[B_h=b_j]$, and
$\tau_j=\Prb[B_h>b_j]=\Prb[B>b_j]$.  Zero masses are allowed.  The signed
buyer virtual coefficient at report $b_j$ is
\begin{equation}\label{eq:buyer-signed-coefficient}
 A^B_{j,h}:=(1+\alpha)b_jp_{j,h}-\alpha h\tau_j.
\end{equation}
For $J_j=((j-1)h,jh]$, spread these weights as
\begin{equation}\label{eq:spread-buyer}
 f_{B,h}(b):=\sum_{j\ge1}\frac{p_{j,h}}h\mathbf1_{J_j}(b),
 \qquad
 a_{B,h}(b):=\sum_{j\ge1}\frac{A^B_{j,h}}h\mathbf1_{J_j}(b).
\end{equation}
Only finitely many summands are nonzero.  If
$\tau(b)=\Prb[B>b]$ and
\[
 a_B(b):=(1+\alpha)b f_B(b)-\alpha\tau(b),
\]
then the first function in \cref{eq:spread-buyer} is the usual density
histogram, while on $J_j$ the second is
$(1+\alpha)b_jp_{j,h}/h-\alpha\tau(b_j)$.  Smoothness of the density and
Lipschitz continuity of the tail therefore give
\begin{equation}\label{eq:buyer-L1}
 \|f_{B,h}-f_B\|_1\longrightarrow0,
 \qquad
 \|a_{B,h}-a_B\|_1\longrightarrow0.
\end{equation}

Extend all four densities by zero to a fixed compact rectangle and define
\[
 w_h(s,b):=f_{S,h}(s)a_{B,h}(b)
          -a_{S,h}(s)f_{B,h}(b),
 \qquad
 w(s,b):=f(s)a_B(b)-a_S(s)f_B(b).
\]
The product structure and
\cref{eq:seller-L1,eq:buyer-L1} imply
\begin{equation}\label{eq:objective-density-L1}
 \|w_h-w\|_1\longrightarrow0.
\end{equation}
For example,
\[
 \|f_{S,h}a_{B,h}-fa_B\|_{L^1(ds\,db)}
 \le \|f_{S,h}\|_1\|a_{B,h}-a_B\|_1
    +\|f_{S,h}-f\|_1\|a_B\|_1.
\]
The second product is bounded in the same way.  This $L^1$ argument
allows the density to be unbounded near the left endpoint.
Let $L_{\alpha,h}$ denote the original-scale version of
\cref{eq:lattice-L} for these grids, and set
$V_{\alpha,h}:=\sup_{x\in\Rtwo}L_{\alpha,h}(x)$.
If $x_{ij}$ is a monotone lattice allocation and $x_h$ is its constant
extension to $I_i\times J_j$, discrete summation by parts gives
\begin{align}
 L_{\alpha,h}(x)
 &=\sum_{i,j}\left[
    g_iA^B_{j,h}
    -p_{j,h}\bigl((1+\alpha)s_i g_i+\alpha hG_{i-1}\bigr)
    \right]x_{ij}\notag\\
 &=\iint w_h(s,b)x_h(s,b)\,ds\,db.
 \label{eq:spread-objective}
\end{align}
Every such extension is pointwise monotone.  Because $0\le x_h\le1$,
\cref{eq:objective-density-L1} yields
\[
 \limsup_{h\downarrow0}V_{\alpha,h}
 \le \mathcal V_\alpha(\widehat\psi\circ c_F,F_B).
\]
Here the right side is written in quantile coordinates; the change of variables
$u=F(s)$ identifies it with the supremum of $\iint w(s,b)x(s,b)\,ds\,db$ over
value-coordinate pointwise-monotone allocations.  Conversely, average any such
allocation on each $I_i\times J_j$.  The cell averages form a monotone lattice
array, and their step extensions converge at almost every Lebesgue point.
Dominated convergence against $|w|$, together with
\cref{eq:objective-density-L1}, gives the reverse inequality.  Hence
\begin{equation}\label{eq:V-lattice-limit}
 V_{\alpha,h}\longrightarrow
 \mathcal V_\alpha(\widehat\psi\circ c_F,F_B).
\end{equation}

Let $c_h$ be the lattice seller's quantile function and set
$\FB_h:=\E[(B_h-S_h)_+]$.  The lattice seller CDFs converge to $F$ by
\cref{eq:lattice-product}: convergence holds uniformly away from zero, and the
mass near zero is controlled by the preceding $F(\delta)+o(1)$ bound.  Since
$F$ is continuous and all distributions have common compact support, their
quantile functions converge in $L^1(0,1)$.  Coupling by quantiles and using
$|B_h-B|\le h$ gives
\begin{equation}\label{eq:FB-lattice-limit}
 |\FB_h-\FB|
 \le h+\int_0^1|c_h(u)-c_F(u)|\,du\longrightarrow0.
\end{equation}
Applying \cref{cor:lattice-bound} before taking the limit proves the claim for
smooth strictly regular sellers.

For the general $\psi$ in the statement, extend it monotonically to the real
line by setting it equal to zero to the left of $0$ and equal to its terminal
value to the right of $1$.  Convolve with a nonnegative smooth mollifier of
radius $\varepsilon_k\downarrow0$, subtract the value at zero, and add
$\varepsilon_k u$.  The resulting bounded functions $\psi_k$ are smooth and
nonnegative on $[0,1]$, satisfy
\[
 \psi_k(0)=0,
 \qquad \psi_k'(u)\ge\varepsilon_k>0,
 \qquad \|\psi_k-\psi\|_1\longrightarrow0.
\]
Put
$c_k=T_\alpha\psi_k$.  Differentiating \cref{eq:T-alpha} gives
\[
 (1+\alpha)c_k(u)+\alpha u c_k'(u)=\psi_k(u).
\]
The term $(1+\alpha)c_k(u)$ is a weighted average of the past values of
$\psi_k$.  Writing $r=(1+\alpha)/\alpha$, the derivative lower bound gives
\[
 \psi_k(u)-(1+\alpha)c_k(u)
 \ge \varepsilon_k\!
 \left(u-r u^{-r}\int_0^u v^r\,dv\right)
 =\frac{\alpha\varepsilon_k}{1+2\alpha}\,u.
\]
Consequently
\begin{equation}\label{eq:ck-derivative-lower}
 c_k'(u)\ge\frac{\varepsilon_k}{1+2\alpha}>0.
\end{equation}
Smoothness of $\psi_k$ and $\psi_k(0)=0$ also make $c_k$ smooth at zero.
Thus $c_k$ is the quantile function of a seller with a strictly positive
smooth density, and its value-space $\alpha$-virtual cost is strictly
increasing.

The operator $T_\alpha$ is an $L^1$ contraction.  For
$\delta_k=\psi_k-\psi$, Fubini's theorem gives
\begin{align}
 \|c_k-c\|_1
 &\le\frac1\alpha\int_0^1v^{1/\alpha}|\delta_k(v)|
       \int_v^1u^{-1-1/\alpha}\,du\,dv\notag\\
 &=\int_0^1(1-v^{1/\alpha})|\delta_k(v)|\,dv
 \le\|\delta_k\|_1.
 \label{eq:T-contraction}
\end{align}
Also, uniformly over $X\in\mathcal X$,
\begin{equation}\label{eq:V-L1-continuity}
 |\mathcal L_\alpha(\psi_k;X)-\mathcal L_\alpha(\psi;X)|
 \le\|\psi_k-\psi\|_1.
\end{equation}
Thus the corresponding quantile values converge.  The positive-part map is
one-Lipschitz, so
\[
 |\FB(c_k,F_B)-\FB(c,F_B)|\le\|c_k-c\|_1\longrightarrow0.
\]

If $c$ is constant on a nondegenerate quantile interval $I$, its defining ODE
gives $\psi=(1+\alpha)c$ almost everywhere on $I$.  Averaging the allocation
rows over $I$ preserves monotonicity and both terms in
\cref{eq:quantile-objective}.  Hence the quantile formulation has the same
value when all rows corresponding to one seller atom are identified.  Apply
the smooth strictly regular case to $c_k$ and pass to the limit.
\end{proof}

\begin{proof}[Proof of \cref{thm:fixed-alpha}]
Apply \cref{lem:fixed-ironing} to the seller quantile $c$ and then
\cref{lem:regular-lattice-limit} to $\overline\psi$.  Since the original
seller quantile is strictly increasing,
\[
 V_\alpha
 =\mathcal V_\alpha(\psi_\alpha,F_B)
 =\mathcal V_\alpha(\overline\psi,F_B)
 \ge\beta(\alpha)\FB(\overline c,F_B)
 \ge\beta(\alpha)\FB(c,F_B).
\]
At $\alpha=1$, $q=1/2$ and
\[
 C(1)=\frac12\BetaFn\!\left(\frac12,\frac12\right)=\frac\pi2.
\]
\end{proof}

\section{From the fixed multiplier to \texorpdfstring{$1/\pi$}{1/pi}}\label{sec:pi}

\begin{lemma}[Continuous virtual threshold identities]\label{lem:threshold-id}
For smooth priors and every price $p$,
\begin{align}
 \int_p^1(\phi_B(b)-s)f_B(b)\,db
   &=(p-s)\Prb[B\ge p], \label{eq:buyer-threshold-id}\\
 \int_0^p(b-\psi_S(s))f_S(s)\,ds
   &=(b-p)\Prb[S\le p]. \label{eq:seller-threshold-id}
\end{align}
\end{lemma}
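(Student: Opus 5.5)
Both identities follow from a direct computation using the definitions \eqref{eq:ordinary-virtual}. The key observation is that the virtual value and virtual cost, multiplied by the density, are exact derivatives. Because $f_B(b)\phi_B(b)=bf_B(b)-(1-F_B(b))$, we have
\[
 \frac{d}{db}\bigl[-b(1-F_B(b))\bigr]=bf_B(b)-(1-F_B(b))=\phi_B(b)f_B(b).
\]
In the same way, $f_S(s)\psi_S(s)=sf_S(s)+F_S(s)=\frac{d}{ds}[sF_S(s)]$. With these antiderivatives in hand, each identity reduces to the fundamental theorem of calculus.

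For \eqref{eq:buyer-threshold-id}, I will split the integrand into $\phi_B(b)f_B(b)$ and $-sf_B(b)$. The first piece integrates to
\[
 \bigl[-b(1-F_B(b))\bigr]_p^1=p(1-F_B(p)),
\]
since $F_B(1)=1$. The second piece gives $-s(1-F_B(p))$. Adding them yields $(p-s)(1-F_B(p))$, and because $B$ has a density, $1-F_B(p)=\Prb[B\ge p]$.

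For \eqref{eq:seller-threshold-id}, I will integrate $bf_S(s)$ to $bF_S(p)$ and $\psi_S(s)f_S(s)$ to $\bigl[sF_S(s)\bigr]_0^p=pF_S(p)$. The difference is $(b-p)F_S(p)=(b-p)\Prb[S\le p]$.

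The only delicate point is the boundary behaviour, and I expect it to be the main obstacle. The densities are smooth and positive on the support interval, which the model normalizes to $[0,1]$, so the boundary terms $b(1-F_B(b))$ at $b=1$ and $sF_S(s)$ at $s=0$ vanish. For prices $p$ outside the support, I will interpret the integrals over the support intersected with the range of integration, using the convention that densities are zero outside. With that convention the same antiderivative computation applies, and the identities hold trivially by continuity of the CDFs.
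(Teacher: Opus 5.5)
Your computation is correct and is the paper's proof. The paper substitutes the definitions of $\phi_B$ and $\psi_S$ and integrates the tail and CDF terms, and your antiderivatives $-b(1-F_B(b))$ and $sF_S(s)$ perform exactly that integration by parts.

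Your final paragraph is not needed, and as written it is wrong.

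\emph{Not needed.} The boundary terms $1\cdot(1-F_B(1))$ and $0\cdot F_S(0)$ vanish for any law on $[0,1]$, not because of positivity. Moreover, wherever the lemma is used, both densities are strictly positive on all of $[0,1]$, so no price lies outside a support.

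\emph{Wrong.} Suppose a support were a proper subinterval and you set the integrand to zero off it. Then the identities fail. For buyer support $[\ell,1]$ and $p<\ell$, the left side of \eqref{eq:buyer-threshold-id} becomes $\ell-s$, while the right side is $p-s$. The seller identity fails in the same way above the seller's support. Continuity of the CDFs does not repair this.

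\emph{Fix.} Keep the expanded integrands $\phi_B f_B:=bf_B-(1-F_B)$ and $\psi_S f_S:=sf_S+F_S$ off the supports. These equal $-1$ below the buyer's support and $1$ above the seller's support. With them, your fundamental-theorem-of-calculus argument works verbatim for every $p\in[0,1]$.
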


\begin{proof}
Substitute \cref{eq:ordinary-virtual} and integrate the tail term in the first
identity and the CDF term in the second.
\end{proof}

\begin{lemma}[Lagrangian value is bounded by proposer profits]\label{lem:V-profit}
For smooth positive-density priors,
\[
 V_1\le\Pi_S+\Pi_B.
\]
\end{lemma}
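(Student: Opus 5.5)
Fix any $x\in\Rtwo$. At $\alpha=1$, \cref{eq:Lalpha-virtual} gives $\phi_B^1=2b-(1-F_B)/f_B=b+\phi_B$ and $\psi_S^1=s+\psi_S$. The plan is to split the objective as
\[
 \GFT(x)+\Lambda(x)=\E[(\phi_B(B)-S)x(S,B)]+\E[(B-\psi_S(S))x(S,B)],
\]
and to bound the first term by $\Pi_S$ and the second by $\Pi_B$. Taking the supremum over $x$ then gives $V_1\le\Pi_S+\Pi_B$.

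For the first term, fix a seller type $s$. The row $b\mapsto x(s,b)$ is nondecreasing with values in $[0,1]$, so we write it as a mixture of thresholds. Concretely, $x(s,b)=\int_0^1\mathbf 1\{b\ge p_s(t)\}\,dt$, where $p_s(t):=\inf\{b:x(s,b)\ge t\}$ is possibly $+\infty$, meaning no trade. The boundary convention $b\ge p$ versus $b>p$ is immaterial because $B$ has a density. By Fubini and \cref{eq:buyer-threshold-id},
\[
 \E_B[(\phi_B(B)-s)x(s,B)]=\int_0^1 (p_s(t)-s)\Prb[B\ge p_s(t)]\,dt,
\]
where a threshold above $1$ contributes $0$, and a threshold below $0$ is replaced by $0$ without changing anything. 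Each integrand is at most $\max\{0,\sup_p(p-s)\Prb[B\ge p]\}$. Integrating over $S$ gives the first term $\le\Pi_S$.

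The second term is handled symmetrically. For a fixed buyer type $b$, the column $s\mapsto x(s,b)$ is nonincreasing, so $x(s,b)=\int_0^1\mathbf 1\{s\le p_b(t)\}\,dt$. Then \cref{eq:seller-threshold-id} turns the inner expectation into $\int_0^1(b-p_b(t))\Prb[S\le p_b(t)]\,dt$, which is at most the buyer's optimal profit. Integrating over $B$ gives the second term $\le\Pi_B$.

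The main care point is the threshold representation at the boundaries. Thresholds outside $[0,1]$ must be clipped so that they correspond either to a legitimate price in $[0,1]$ or to the no-trade option, and the identities of \cref{lem:threshold-id} must be applied with $p$ clipped to the support. Beyond that, one needs integrability (bounded virtual values on compact support with positive continuous densities) to justify Fubini. The rest is routine.
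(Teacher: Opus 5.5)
Your proposal is correct and follows the paper's proof essentially step for step. You use the same $\alpha=1$ split into $\E[(\phi_B(B)-S)x(S,B)]+\E[(B-\psi_S(S))x(S,B)]$, the same representation of each nondecreasing row (and nonincreasing column) as a subprobability mixture of posted-price thresholds, and the same use of \cref{lem:threshold-id} to bound each mixture by the offerer's optimal profit. Your layer-cake thresholds $p_s(t)$ are just the paper's right-continuous Stieltjes measure $\mu_s$ written as a pushforward of Lebesgue measure on $[0,1]$, and your remarks on clipping thresholds to the support and on integrability for Fubini fill in details the paper leaves implicit.
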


\begin{proof}
For $x\in\Rtwo$, multiplier one gives
\begin{align}
 \GFT(x)+\Lambda(x)
 &=\E[(\phi_B(B)-S)x(S,B)]
   +\E[(B-\psi_S(S))x(S,B)].
 \label{eq:V-decomp}
\end{align}
Fix $s$ and choose a right-continuous version of the nondecreasing row
$b\mapsto x(s,b)$.  Its Stieltjes measure $\mu_s$ has total mass at most one
and satisfies, almost everywhere,
\[
 x(s,b)=\int_{[0,1]}\mathbf1\{b\ge p\}\,d\mu_s(p);
\]
the missing mass is the no-trade rule.  Fubini's theorem and
\cref{eq:buyer-threshold-id} express the row contribution to the first term of
\cref{eq:V-decomp} as a mixture of seller posted-price profits.  It is at most
the seller's optimal profit at $s$.  Averaging over $S$ bounds the first term
by $\Pi_S$.

For fixed $b$, the same construction applied to the nonincreasing column gives
a subprobability mixture of rules $\mathbf1\{s\le p\}$.  By
\cref{eq:seller-threshold-id}, its contribution is at most the buyer's optimal
profit at $b$.  Averaging over $B$ and taking the supremum over $x$ proves the
lemma.
\end{proof}

\begin{lemma}[Smooth approximation of Borel priors]\label{lem:smooth-approx}
For any independent Borel priors on $[0,1]$, there are independent priors with
strictly positive smooth densities such that
\[
 \FB_k\to\FB,
 \qquad \Pi_{S,k}\to\Pi_S,
 \qquad \Pi_{B,k}\to\Pi_B.
\]
\end{lemma}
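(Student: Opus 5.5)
Our plan is to approximate by smoothing with a small uniform mixture and a mollifier, and then to prove each of the three convergences separately, since each functional has a simple continuity structure. Given Borel priors $F_S,F_B$ on $[0,1]$, let $\eta_k\downarrow0$ and $\varepsilon_k\downarrow0$. Convolve each law with a smooth nonnegative mollifier of radius $\varepsilon_k$, map back into $[0,1]$ by the affine contraction $v\mapsto \varepsilon_k+(1-2\varepsilon_k)v$ applied before convolution, and mix with weight $\eta_k$ with a smooth density bounded below on $[0,1]$. The resulting densities are smooth and strictly positive on $[0,1]$. Moreover, $S_k\to S$ and $B_k\to B$ in the coupling sense: there exist couplings with $\E|S_k-S|\le 3\varepsilon_k+\eta_k$, and similarly for $B$. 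Independence is kept by coupling the two coordinates separately.

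For first best, $(b-s)_+$ is $1$-Lipschitz in each coordinate, so
\[
 |\FB_k-\FB|\le \E|S_k-S|+\E|B_k-B|\to0.
\]

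For the profits we use the representation $\Pi_S=\E_S[R_B(S)]$, where $R_B(s)=\max\{0,\sup_p (p-s)\Prb[B\ge p]\}$, and similarly for $\Pi_B$. We will not argue through price convergence directly, because the tail $p\mapsto\Prb[B\ge p]$ may jump. Instead we use the fact that $R_B$ is convex and $1$-Lipschitz in $s$, and write $R_B(s)=\sup_p\E[(p-s)\mathbf 1\{B\ge p\}]_+$. The key step is to show that $\sup_s|R_{B_k}(s)-R_B(s)|\to0$. Combined with the Lipschitz bound in $s$ and the coupling of $S_k$ to $S$, this gives $\Pi_{S,k}\to\Pi_S$; the buyer side is symmetric.

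We expect the uniform convergence of $R_{B_k}$ to be the main obstacle. We would handle it through the dual description $R_B(s)=\sup_{q\in[0,1]}q\,(c_B(1-q)-s)$, where $c_B$ is the buyer quantile function, with the convention that posting at quantile $q$ means selling to the top $q$ fraction. This supremum is unchanged by tie-breaking, since a price at an atom still sells to the whole atom because the responder accepts at equality. The smoothed quantile functions converge to $c_B$ at every continuity point and are uniformly bounded in $[0,1]$. For the lower bound, fix a near-optimal quantile $q$ for $R_B(s)$ and perturb it slightly to a continuity point of $c_B$, approaching from the side that keeps $c_B(1-q)$ nearly the same; this is possible because $c_B$ is monotone and left-continuous. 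This yields $\liminf_k R_{B_k}(s)\ge R_B(s)$. For the upper bound, take any limit of near-optimal quantiles $q_k$. Upper semicontinuity of $q\mapsto q\,c_B(1-q)$, which is left-continuous from the correct side by monotonicity, together with $\limsup_k c_{B_k}(1-q_k)\le c_B((1-q)+)$, gives the reverse inequality after possibly shifting to the atom's endpoint. Uniformity in $s$ then follows because every $R$ is $1$-Lipschitz in $s$ and the domain is compact: pointwise convergence of equi-Lipschitz functions on $[0,1]$ is uniform. If the upper semicontinuity step turns out to be delicate at atoms, the fallback is to note that the mollification only moves mass by at most $\varepsilon_k$. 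This gives $\Prb[B_k\ge p]\le \Prb[B\ge p-\varepsilon_k]+\eta_k$ and the reverse inequality, hence $|R_{B_k}(s)-R_B(s)|\le 2\varepsilon_k+\eta_k$ by shifting the posted price by $\varepsilon_k$. This direct sandwich is in fact what we would try first.
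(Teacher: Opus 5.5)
Your proposal is correct, and the route you say you would try first (shift the posted price to offset the smoothing displacement, absorb the mixture weight as an additive error in the responder's tail, then use that $R_B$ is $1$-Lipschitz in the type together with the coupling) is essentially the paper's argument; the paper makes the same price shift $p^-=(p-\varepsilon)\vee0$ under an almost-sure coupling and bounds the uniform mixture separately by total variation. One small correction: the contraction $v\mapsto\varepsilon_k+(1-2\varepsilon_k)v$ together with the mollifier moves mass by up to $2\varepsilon_k$, so the sandwich should read $\Prb[B\ge p+2\varepsilon_k]-\eta_k\le\Prb[B_k\ge p]\le\Prb[B\ge p-2\varepsilon_k]+\eta_k$ with a price shift of $2\varepsilon_k$, which changes only constants, and the quantile-based route is unnecessary.
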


\begin{proof}
We first establish stability under an $L^\infty$ coupling.  Suppose
$|S-S'|\le\varepsilon$ and $|B-B'|\le\varepsilon$ almost surely.  Since the
positive-part map is one-Lipschitz in each argument,
\[
 |\E(B-S)_+-\E(B'-S')_+|\le2\varepsilon.
\]

Consider seller profit.  Fix paired seller types $s,s'$ and a price $p$ for
the original instance.  If $p-s\le2\varepsilon$, its profit is at most
$2\varepsilon$, so the no-trade option in the perturbed instance gives the
required comparison.  If $p-s>2\varepsilon$, post
$p^-=(p-\varepsilon)\vee0$.  Then
\[
 \{B\ge p\}\subseteq\{B'\ge p^-\},
 \qquad p^- -s'\ge p-s-2\varepsilon>0.
\]
The profit from $p^-$ in the perturbed instance is therefore at least the
profit from $p$ minus $2\varepsilon$.  Taking the supremum over $p$ and then
interchanging the primed and unprimed instances shows that the conditional,
and hence ex-ante, seller-profit values differ by at most $2\varepsilon$.
For buyer profit, the symmetric transfer is
$p^+=(p+\varepsilon)\wedge1$; the inclusion
$\{S\le p\}\subseteq\{S'\le p^+\}$ gives the same bound.

Choose $\varepsilon_k\downarrow0$.  Map each original value $v\in[0,1]$ to
$2\varepsilon_k+(1-4\varepsilon_k)v$ and add independent noise with a smooth
density supported on $[-\varepsilon_k,\varepsilon_k]$.  The resulting prior
has a smooth density supported in $[\varepsilon_k,1-\varepsilon_k]$ and can be
coupled to the original value with displacement at most
$3\varepsilon_k$.  Mix it with the uniform distribution on $[0,1]$ with
probability $\delta_k\downarrow0$.  The mixture has a smooth density that is
strictly positive on $[0,1]$.

Mixing changes each prior by total variation at most $\delta_k$.  Conditional
proposer profit and the first-best integrand lie in $[0,1]$; responder tails
and CDFs change by at most $\delta_k$.  Thus changing both priors by these
mixtures changes each proposer-profit value and first best by at most
$2\delta_k$.  Combining this estimate with the coupling bounds above proves
all three convergences.
\end{proof}

\begin{proof}[Proof of \cref{thm:pi-main}]
For smooth positive-density priors, \cref{thm:fixed-alpha} at $\alpha=1$ and
\cref{lem:V-profit,lem:gft-profit} give
\[
 \frac2\pi\FB\le V_1\le\Pi_S+\Pi_B\le\SO+\BO=2\RO.
\]
For arbitrary Borel priors, apply the smooth inequality to the approximating
sequence in \cref{lem:smooth-approx} and pass to the limit:
\[
 \frac2\pi\FB\le\Pi_S+\Pi_B\le2\RO.
\]
Dividing by two proves the theorem.
\end{proof}

Fei's constant is $1/T$, where $T>1$ is the larger solution of
$T=2+\log T$, and
\[
 \frac1T=0.317844432899372\ldots
 <\frac1\pi=0.318309886183790\ldots.
\]
The value $\pi$ is the beta integral at the multiplier where the two
proposer-profit objectives enter symmetrically.

\section{An explicit hard family}\label{sec:hard-instance}

We now prove \cref{thm:upper-main}.  The family depends on two vanishing
parameters: the buyer truncation $\varepsilon>0$ and a seller perturbation
$\eta>0$.  We first define the limiting shape.

Fix parameters $a,k_1,k_2,x,z>0$ with $0<x<z<1$, and define
\begin{equation}\label{eq:Q-A}
 Q(y):=y^a\exp\!\bigl(k_1(y-1)+k_2(y^2-1)\bigr),
 \qquad
 A(y):=a+k_1y+2k_2y^2,
 \quad 0\le y\le1.
\end{equation}
Then $Q(1)=1$ and $Q'(y)=Q(y)A(y)/y$.

\subsection{The two distributions}

\begin{definition}[Truncated equal-revenue buyer]\label{def:equal-revenue}
For $\varepsilon\in(0,1)$, let $B_\varepsilon$ be the buyer value with
survival function
\[
 \Prb[B_\varepsilon\ge b]=
 \begin{cases}
  1, & 0\le b\le\varepsilon,\\
  \varepsilon/b, & \varepsilon<b\le1.
 \end{cases}
\]
Equivalently, the distribution has density $\varepsilon/b^2$ on
$[\varepsilon,1)$ and an atom of mass $\varepsilon$ at $1$.
\end{definition}

\begin{definition}[Three-region seller]\label{def:seller-hard}
For $\eta>0$, set
\[
 q_\eta:=\frac{1+\eta-z}{1+\eta-x}.
\]
Let $S_\eta$ be the seller value with CDF
\begin{equation}\label{eq:Geta}
 G_\eta(s):=
 \begin{cases}
  q_\eta Q(s/x), & 0\le s\le x,\\[2mm]
  \dfrac{1+\eta-z}{1+\eta-s}, & x<s<z,\\[3mm]
  1, & z\le s\le1.
 \end{cases}
\end{equation}
\end{definition}

\subsection{Distribution validity and optimal offers}

For $s=xy\in(0,x]$, define
\begin{equation}\label{eq:psi-lower}
 \psi(s):=xy\frac{1+A(y)}{A(y)}.
\end{equation}

\begin{lemma}[Validity and optimal offers]\label{lem:hard-validity}
Assume $a>2k_2$ and $\psi(x)<1$.  Then $G_\eta$ is a continuous CDF.  The
seller-offering mechanism posts $1$ almost surely, and the buyer's unique optimal offer is
\begin{equation}\label{eq:buyer-offer-rule}
 p(b)=
 \begin{cases}
  \psi^{-1}(b), & 0<b\le\psi(x),\\
  x, & \psi(x)<b\le1.
 \end{cases}
\end{equation}
\end{lemma}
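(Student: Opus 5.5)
The plan is to verify each assertion by direct computation. The key observation is that on both nontrivial seller regions the virtual cost $s+G_\eta(s)/G_\eta'(s)$ has a closed form: it is $\psi$ on the lower region and the constant $1+\eta$ on the middle one. Unimodality of the buyer's profit then follows from the sign of its derivative.

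\emph{Validity.} Since $a,k_1,k_2>0$, we have $A>0$ on $[0,1]$. The identity $Q'=QA/y$ then shows that $Q$ is strictly increasing on $(0,1]$, and $Q(0+)=0$ because $a>0$. Hence $G_\eta(0)=0$ and $G_\eta$ increases on $[0,x]$ to $q_\eta Q(1)=q_\eta$. The middle formula is increasing in $s$. It equals $q_\eta$ at $s=x$ and tends to $1$ as $s\uparrow z$. So $G_\eta$ is a continuous, nondecreasing CDF with $G_\eta(z)=1$, and in particular it has no atoms.

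\emph{Seller offers.} The buyer survival function is $\Prb[B_\varepsilon\ge p]=\varepsilon/p$ for $p\in(\varepsilon,1]$. For $s>0$, the profit from such a price is $\varepsilon(1-s/p)$, which is strictly increasing in $p$, so among these prices $p=1$ is best and yields $\varepsilon(1-s)$. For $p\le\varepsilon$ the profit is $p-s\le\varepsilon-s<\varepsilon(1-s)$ whenever $s>0$. No trade yields $0<\varepsilon(1-s)$ for $s<1$, and $S_\eta\le z<1$ always. Therefore every $s\in(0,z]$ uniquely posts $1$. Since $G_\eta$ has no atom at $0$, the seller posts $1$ almost surely.

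\emph{Buyer offers.} Write $\pi_b(p)=(b-p)G_\eta(p)$. On the lower region, with $p=xy$, we have $G_\eta'(p)=q_\eta Q(y)A(y)/(xy)$, so $G_\eta/G_\eta'=xy/A(y)$. On each smooth piece this gives
\[
\pi_b'(p)=G_\eta'(p)\bigl(b-\psi(p)\bigr),
\]
where $\psi$ is as in \eqref{eq:psi-lower}. Next, $\psi$ is strictly increasing. Indeed, $\psi(xy)=x(y+y/A(y))$, and
\[
\frac{d}{dy}\frac{y}{A(y)}=\frac{A-yA'}{A^2}=\frac{a-2k_2y^2}{A^2}>0
\]
by the hypothesis $a>2k_2$. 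Also $\psi(0+)=0$.

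On the middle region, $G_\eta/G_\eta'=1+\eta-p$, so the virtual cost equals $1+\eta>1\ge b$ and $\pi_b$ is strictly decreasing there. On $[z,1]$, $\pi_b=b-p$ is strictly decreasing. Since $\pi_b$ is continuous, it is strictly decreasing on $[x,1]$.

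On $(0,x]$, the sign of $\pi_b'$ is the sign of $b-\psi(p)$. If $b\le\psi(x)$, $\pi_b$ strictly increases up to $\psi^{-1}(b)$ and strictly decreases afterward. If $b>\psi(x)$, it increases on all of $(0,x]$, so the maximizer is $x$.

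Finally, the maximum value is positive: for instance $\pi_b(p)>0$ for small $p<b$. Hence no trade and prices $p\ge b$ are strictly worse. This yields uniqueness and the rule \eqref{eq:buyer-offer-rule}.

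The only delicate step is the monotonicity of $\psi$, which is exactly where $a>2k_2$ is used. The hypothesis $\psi(x)<1$ only ensures that the two branches of \eqref{eq:buyer-offer-rule} are both nonempty. Everything else is sign bookkeeping across the kinks at $x$ and $z$, which is harmless because $\pi_b$ is continuous.
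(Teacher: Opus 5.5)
Your proof is correct and follows the paper's argument. Validity comes from $Q'=QA/y>0$ and agreement of the pieces. The seller's profit $\varepsilon(1-s/p)$ is increasing in $p$. The buyer's profit derivative is $G_\eta'(p)\bigl(b-\psi_\eta(p)\bigr)$, where $\psi_\eta=\psi$ is strictly increasing on the lower region by $a-2k_2y^2>0$ and $\psi_\eta=1+\eta$ on the middle region. Your extra remarks are also accurate: the positive maximum rules out no trade, and $\psi(x)<1$ only makes the second branch nonempty.
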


\begin{proof}
The pieces agree at $x$ and $z$.  Moreover, $Q(0)=0$, $Q(1)=1$, and
$Q'(y)=Q(y)A(y)/y>0$, so $G_\eta$ is a continuous CDF.  Let $g_\eta$ denote
its density on $(0,z)$.

For every seller value $s>0$, a buyer price $p\in(\varepsilon,1]$ gives seller
profit
\[
   (p-s)\frac{\varepsilon}{p}
   =\varepsilon\left(1-\frac{s}{p}\right),
\]
which is increasing in $p$.  Prices below $\varepsilon$ give weakly smaller
profit: their best value is attained at $p=\varepsilon$, and
$\varepsilon(1-s)\ge\varepsilon-s$.  Thus the seller posts $p=1$, uniquely
for $s>0$.  Since $G_\eta$ has zero mass at $s=0$,
\begin{equation}\label{eq:SO-limit}
   \frac{\SO_{\varepsilon,\eta}}{\varepsilon}
   =\E[1-S_\eta]
   =\int_0^1G_\eta(s)\,ds.
\end{equation}

For the buyer's problem, define the seller virtual cost
\[
   \psi_\eta(s):=s+\frac{G_\eta(s)}{g_\eta(s)}.
\]
On the lower region, with $y=s/x$, one has $\psi_\eta(s)=\psi(s)$.
Its derivative is strictly positive:
\begin{equation}\label{eq:psi-derivative}
 \frac1x\frac{d\psi}{dy}
 =1+\frac{A(y)-yA'(y)}{A(y)^2}
 =1+\frac{a-2k_2y^2}{A(y)^2}>0.
\end{equation}
On the middle region, direct differentiation gives
\begin{equation}\label{eq:psi-cap}
    \psi_\eta(s)=1+\eta.
\end{equation}

On each region where $G_\eta$ has a density,
\[
 \frac{d}{dp}\bigl((b-p)G_\eta(p)\bigr)
 =g_\eta(p)\bigl(b-\psi_\eta(p)\bigr).
\]
The lower-region virtual cost is continuous and strictly increasing from $0$
to $\psi(x)$.  Thus, for $b\le\psi(x)$, profit increases up to
$p=\psi^{-1}(b)$ and decreases afterward.  For $b>\psi(x)$, profit increases
throughout the lower region and is maximized there at $x$.  On the middle
region, $\psi_\eta=1+\eta>b$, so profit is strictly decreasing; on
$[z,1]$, it equals $b-p$ and is also decreasing.  This proves
\cref{eq:buyer-offer-rule} and uniqueness.
\end{proof}

\subsection{The limiting ratio}

Let $G$ be \cref{eq:Geta} at $\eta=0$, and set
\begin{equation}\label{eq:F-S-B-integrals}
 \mathcal F:=\int_0^1\frac{G(s)}s\,ds,
 \qquad
 \mathcal S:=\int_0^1G(s)\,ds,
 \qquad
 \mathcal B:=\int_0^x
 \left(1-\frac{s}{\psi(s)}-\log\psi(s)\right)dG(s).
\end{equation}

\begin{lemma}[Sequential limit]\label{lem:sequential-limit}
For the distributions in \cref{def:equal-revenue,def:seller-hard},
\[
 \lim_{\eta\downarrow0}\lim_{\varepsilon\downarrow0}
 \frac{\FB_{\varepsilon,\eta}}{\RO_{\varepsilon,\eta}}
 =\frac{2\mathcal F}{\mathcal S+\mathcal B}.
\]
\end{lemma}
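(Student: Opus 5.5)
We compute each of $\FB$, $\SO$, $\BO$ to first order in $\varepsilon$ at fixed $\eta$. After dividing by $\varepsilon$, we let $\varepsilon\downarrow0$ and then $\eta\downarrow0$. Since $\RO=(\SO+\BO)/2$, it suffices to prove
\[
\lim_{\eta\downarrow0}\lim_{\varepsilon\downarrow0}\frac{\FB_{\varepsilon,\eta}}{\varepsilon}=\mathcal F,\qquad
\lim_{\eta\downarrow0}\lim_{\varepsilon\downarrow0}\frac{\SO_{\varepsilon,\eta}}{\varepsilon}=\mathcal S,\qquad
\lim_{\eta\downarrow0}\lim_{\varepsilon\downarrow0}\frac{\BO_{\varepsilon,\eta}}{\varepsilon}=\mathcal B,
\]
with $\mathcal F>0$ so that the ratio of limits is the limit of ratios. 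The seller-offering term is already exact. By \cref{eq:SO-limit}, $\SO_{\varepsilon,\eta}/\varepsilon=\int_0^1G_\eta$, which does not depend on $\varepsilon$. Moreover $G_\eta\to G$ uniformly, because $q_\eta\to q_0$ and the middle piece converges uniformly on $[x,z]$. This gives the $\mathcal S$ term.

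\textbf{First best.} For $s\ge\varepsilon$, the layer-cake formula gives
\[
\E[(B_\varepsilon-s)_+]=\int_s^1\Prb[B_\varepsilon\ge b]\,db=\varepsilon\log(1/s).
\]
For $s<\varepsilon$ the same quantity is at most $\varepsilon(1+\log(1/\varepsilon))$. Hence
\[
\frac{\FB_{\varepsilon,\eta}}{\varepsilon}=\int_{[\varepsilon,1]}\log(1/s)\,dG_\eta(s)+O\bigl(G_\eta(\varepsilon)(1+\log(1/\varepsilon))\bigr).
\]
Near zero, $G_\eta(\varepsilon)\asymp\varepsilon^a$, so the error term vanishes. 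Integrating by parts, $\int\log(1/s)\,dG_\eta=\int_0^1G_\eta(s)/s\,ds$; the boundary term vanishes since $G_\eta(s)\log s\to0$. Because $G_\eta(s)/s\le Cs^{a-1}$ uniformly in $\eta$, dominated convergence sends this to $\mathcal F$ as $\eta\downarrow0$.

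\textbf{Buyer offering.} By \cref{lem:hard-validity}, a buyer of value $b$ trades with seller $s$ exactly when $s\le p(b)$. Since $\psi$ is increasing (\cref{eq:psi-derivative}), this holds exactly when $s\le x$ and $b\ge\psi(s)$; sellers in $(x,z)$ never trade. Fubini over the buyer density $\varepsilon/b^2$ on $[\varepsilon,1)$ plus the atom $\varepsilon$ at $1$ gives, for $\psi(s)\ge\varepsilon$,
\[
\frac1\varepsilon\E\bigl[(B_\varepsilon-s)\mathbf 1\{B_\varepsilon\ge\psi(s)\}\bigr]
=\int_{\psi(s)}^1\frac{b-s}{b^2}\,db+(1-s)
=1-\frac{s}{\psi(s)}-\log\psi(s).
\]
Sellers with $\psi(s)<\varepsilon$ satisfy $s<\varepsilon$. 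They have mass $O(\varepsilon^a)$ and per-unit contribution $O(\log(1/\varepsilon))$, so they are negligible. Two observations handle the $\eta$-dependence. First, $\psi_\eta=\psi$ on the lower region, since $\psi$ is invariant under scaling the CDF by $q_\eta$. Second, $dG_\eta=(q_\eta/q_0)\,dG$ there. Hence the limit in $\varepsilon$ equals $(q_\eta/q_0)\mathcal B$, which tends to $\mathcal B$.

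\textbf{Main obstacle.} The delicate step is justifying the $\varepsilon\downarrow0$ limits uniformly near $s=0$. There the logarithmic integrands $\log(1/s)$ and $-\log\psi(s)$, with $\psi(s)\asymp s$ by \cref{eq:psi-lower} since $A(0)=a>0$, are unbounded. I will dominate them by $C s^{a-1}\log(1/s)$ against Lebesgue measure, which is integrable because $a>0$, and then apply dominated convergence. Combining the three limits with $\RO=(\SO+\BO)/2$ yields $2\mathcal F/(\mathcal S+\mathcal B)$.
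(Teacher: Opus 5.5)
Your proposal is correct and follows essentially the paper's route: first-order asymptotics in $\varepsilon$ for $\FB$, $\SO$, and $\BO$ via dominated convergence with the majorant $s^{a-1}(1+|\log s|)$, followed by the limit $\eta\downarrow0$. Your per-seller formula $\E[(B_\varepsilon-s)_+]=\varepsilon\log(1/s)$ followed by integration by parts, and the exact scaling $dG_\eta=(q_\eta/q_0)\,dG$ in the buyer-offering term, are only cosmetic variations on the paper's direct use of $\int_0^1G_\eta(s)\Prb[B_\varepsilon\ge s]\,ds$ and its uniform majorant. One small slip: passing from the three limits to the limit of the ratio requires the denominator limits, $\int_0^1G_\eta+(q_\eta/q_0)\mathcal B$ and then $\mathcal S+\mathcal B$, to be positive, not $\mathcal F>0$. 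This holds trivially here because $\int_0^1G_\eta>0$.
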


\begin{proof}
The first-best identity
\[
 \FB_{\varepsilon,\eta}
 =\int_0^1G_\eta(s)\Prb[B_\varepsilon\ge s] \,ds
\]
gives
\[
 \lim_{\varepsilon\downarrow0}
 \frac{\FB_{\varepsilon,\eta}}\varepsilon
 =\int_0^1\frac{G_\eta(s)}s\,ds.
\]
Near zero, $G_\eta(s)$ is a constant multiple of $s^a$, so
$G_\eta(s)/s$ is integrable.  Moreover,
$\Prb[B_\varepsilon\ge s]/\varepsilon\le1/s$ for $s>0$.  Dominated
convergence therefore proves the limit.  \Cref{eq:SO-limit} gives the
seller-offering limit.

Under \cref{eq:buyer-offer-rule}, a seller type $s\le x$ trades precisely when
$B_\varepsilon\ge\psi(s)$.  For every fixed $v\in(0,1)$,
\begin{align*}
 \lim_{\varepsilon\downarrow0}
 \frac1\varepsilon\E[(B_\varepsilon-s)
       \mathbf1\{B_\varepsilon\ge v\}]
 &=\int_v^1\frac{b-s}{b^2}\,db+(1-s)\\
 &=1-\frac{s}{v}-\log v.
\end{align*}
Substitute $v=\psi(s)$ and integrate over $dG_\eta(s)$ on $[0,x]$.
Seller types above $x$ never trade under a buyer offer.  Since $v=\psi(s)>s$, the scaled integrand has the common bound
\[
 \frac1\varepsilon\E[(B_\varepsilon-s)
       \mathbf1\{B_\varepsilon\ge v\}]
 \le 1+|\log v|.
\]
Indeed, the displayed limit formula gives the bound when $\varepsilon\le v$;
when $v<\varepsilon$, the left side equals
$1-\log\varepsilon-s/\varepsilon\le1-\log v$.
\begin{samepage}
As $s\downarrow0$, $\psi(s)\asymp s$ and
$dG_\eta(s)=O(s^{a-1})\,ds$, uniformly for small $\eta$.  Hence
$s^{a-1}(1+|\log s|)$ is a common integrable majorant, and the buyer-offering
integral converges to $\mathcal B$.  Also, $G_\eta\to G$
pointwise, while $G_\eta(s)/s$ is bounded by a constant multiple of
$s^{a-1}$ on $(0,x]$ and uniformly on $[x,1]$.  Dominated convergence gives
\[
 \int_0^1\frac{G_\eta(s)}s\,ds\longrightarrow\mathcal F,
 \qquad
 \int_0^1G_\eta(s)\,ds\longrightarrow\mathcal S.
\]
\end{samepage}
Combining the first-best, seller-offering, and buyer-offering limits and using
$\RO=(\SO+\BO)/2$ gives the stated ratio.  Taking the limits in this order
keeps the buyer's boundary offer unique during the $\varepsilon$ limit.
\end{proof}

The three quantities have convenient unit-interval formulas.  Put
\begin{equation}\label{eq:I-defs}
 I_F:=\int_0^1\frac{Q(y)}y\,dy,
 \qquad
 I_S:=\int_0^1Q(y)\,dy,
\end{equation}
and
\begin{equation}\label{eq:IB-def}
 I_B:=\int_0^1
 \left[
 \frac1{1+A(y)}
 -\log\!\left(xy\frac{1+A(y)}{A(y)}\right)
 \right]Q'(y)\,dy.
\end{equation}
With $q_0=(1-z)/(1-x)$, elementary integration on the middle and terminal
regions gives
\begin{align}
 \mathcal F
 &=q_0I_F
 +(1-z)\log\!\frac{z(1-x)}{x(1-z)}-\log z,
 \label{eq:F-explicit}\\
 \mathcal S
 &=q_0xI_S
 +(1-z)\log\!\frac{1-x}{1-z}+(1-z),
 \label{eq:S-explicit}\\
 \mathcal B&=q_0I_B.
 \label{eq:B-explicit}
\end{align}

\subsection{Certified parameters}

Take the following terminating decimals as exact rational parameters:
\begin{align}
 a&=0.095412847191313230,\notag\\
 k_1&=0.0034331918146865968,\notag\\
 k_2&=0.000042063244380725156,\label{eq:parameters}\\
 x&=0.027131876557804906,\notag\\
 z&=0.84530040073301205.\notag
\end{align}
They satisfy $a>2k_2$ and
\[
 A(1)=0.098930165494761277112,
 \qquad
 \psi(x)=0.3013846934020427791\ldots<1,
\]
which verifies the conditions used in \cref{eq:psi-derivative,eq:buyer-offer-rule}.

Arb interval arithmetic at 100 decimal digits gives the enclosures in
\cref{tab:certificate}; every displayed radius has been rounded upward.

\begin{table}[t]
\centering
\caption{Certified limiting quantities for \cref{eq:parameters}.}
\label{tab:certificate}
\begin{tabular}{@{}lc@{}}
\toprule
Quantity & Certified enclosure midpoint and radius \\
\midrule
$I_F$ & $10.44755489238983467309805621763\ \pm 10^{-28}$ \\
$I_S$ & $0.91137899753252926492220359010\ \pm 10^{-29}$ \\
$I_B$ & $12.52948706468993154373219525462\ \pm 10^{-28}$ \\
$\mathcal F$ & $2.645835354262006463976354346239\ \pm 10^{-29}$ \\
$\mathcal S$ & $0.443087560739101222493372791649\ \pm 10^{-30}$ \\
$\mathcal B$ & $1.992363179780563445946824002812\ \pm 10^{-29}$ \\
$2\mathcal F/(\mathcal S+\mathcal B)$
 & $2.172768523084520237024664850580\ \pm 10^{-29}$ \\
\bottomrule
\end{tabular}
\end{table}

In particular,
\[
    \frac{2\mathcal F}{\mathcal S+\mathcal B}
    >2.17276852308451.
\]
By \cref{lem:sequential-limit}, sufficiently small positive $\eta$ followed by
sufficiently small positive $\varepsilon$ yields an actual pair of
distributions with the same strict inequality.  This proves
\cref{thm:upper-main}.  The interval-arithmetic verification is given in
\cref{app:certificate}.

\section{Conclusion}

We proved
\[
   \frac1\pi\le\rho_{\rm RO}<0.460242308085529.
\]
The lower bound follows from the fixed-multiplier guarantee at $\alpha=1$,
where the Beta-function constant is $\pi/2$ and the Lagrangian decomposes into
seller and buyer posted-price profit terms.  The upper bound follows from an
explicit hard family with certified limiting ratio
$\FB/\RO>2.17276852308451$.

Determining the exact value of $\rho_{\rm RO}$ remains open.  The lower bound
is governed by a multiplier-dependent monotone-allocation argument, whereas
the hard family is governed by the seller's lower-tail virtual-cost curve.  A
tight characterization must connect these two structures.

\appendix

\section{Proof of the parameterized greedy bound}\label{app:parameter-audit}

This appendix proves \cref{prop:capacity} by extending the state argument of
Liu et al.~\cite[Sections~4--5 and Appendix~C]{LQRW26} from their target
$\beta=1/2$ to a symbolic target.  Their maximin--greedy equivalence already
holds for arbitrary $\beta$; the work here is to verify the no-failure lemmas
with the target left symbolic.  Fix $\alpha>0$, let $q=1/(1+\alpha)$ and
$a=1-q$, and assume
\begin{equation}\label{eq:beta-condition-app}
 0<\beta C(\alpha)\le1,
 \qquad C(\alpha)=q\BetaFn(q,q).
\end{equation}
Since
\[
 C(\alpha)=\int_0^1q t^{q-1}(1-t)^{q-1}\,dt>1,
\]
the condition also gives $0<\beta<1$, as required by the state recursion.

For a full-support lattice seller, set
\[
 r_i:=\frac{F_S(i-1)}{f_S(i)},\qquad i\ge1,
\]
and set $r_0:=r_{\rm base}:=0$.  Thus
$\psi_S^\alpha(i)=(1+\alpha)i+\alpha r_i$ for $i\ge1$.  At a successful tight
step $H(b;x)=0$, the current column has threshold form
\[
 x(s,b)=1\ (s<t),\qquad x(t,b)=y\in(0,1],\qquad x(s,b)=0\ (s>t).
\]
Its \emph{clear state} is $\omega=(y;r_t,\ldots,r_{b-1})$.  The base clear
state is $\omega_0=(\beta q;r_{\rm base})$.  Given a clear state reached at
buyer value $b_0$, the compressed execution evaluates the positive
normalization
\[
 \widehat H_{t,b_0}(b;x)
 :=\frac{H(b;x)-H(b_0;x)}{F_S(t)},\qquad b>b_0.
\]
The clear-state coordinates and the future ratio tail determine every later
greedy update.  Write $\operatorname{ALG}(\omega;\mathbf r)$ for this
compressed execution with future ratios $\mathbf r$.

Two clear states are comparable when one ratio vector is a suffix of the
other.  Write $\omega\prec\omega'$ when the ratio vector of $\omega'$ is a
proper suffix of that of $\omega$, or when the vectors agree and the threshold
allocation of $\omega'$ is larger.  Thus $\omega'$ is the more demanding
state.  The following lemma records the two state properties used below.

\begin{lemma}[Greedy-state toolkit]\label{lem:imported-toolkit}
Fix $\alpha\ge0$, $\beta\ge0$, and an $\alpha$-weakly regular full-support
lattice seller.
\begin{enumerate}[label=(\roman*)]
 \item For a fixed terminal value $M$, the maximin inequality
 \cref{eq:fixed-maximin} holds exactly when the greedy construction succeeds
 through $M$.  The inequalities for all finite $M$ are therefore equivalent
 to never failing.
 \item Clear states are sufficient statistics.  If $\omega\prec\omega'$ and
 $\operatorname{ALG}(\omega';\mathbf r)$ succeeds, then
 $\operatorname{ALG}(\omega;\mathbf r)$ succeeds and its output is dominated
 by the output from $\omega'$.  Equivalently, failure from $\omega$ implies
 failure from $\omega'$.
\end{enumerate}
\end{lemma}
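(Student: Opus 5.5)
Both parts are imported from Liu, Qin, Ren, and Wang~\cite{LQRW26}.  The plan is to check that nothing in their arguments uses the specific value $\beta=1/2$ or $\alpha=1$.  The only places where parameters enter are the linear coefficients of $H(b;x)$ in \cref{eq:H-local}, namely $(1+\alpha)$, $\beta$, and $\alpha$, and the weak-regularity hypothesis on the ratios $r_i$.

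For (i), I would follow the maximin reduction already carried out before \cref{prop:capacity}.  By \cref{eq:sion-explicit} and the vertex argument, \cref{eq:fixed-maximin} for a fixed $M$ is equivalent to the existence of $x\in\Rtwo^{W,M}$ with $H(b;x)\ge0$ for all $b\le M$.

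One direction is immediate: if the greedy succeeds through $M$, its output is such an allocation.

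For the converse, I would show that the greedy column at each $b$ is pointwise minimal among feasible monotone columns, by induction on $b$.  The induction goes as follows.
\begin{itemize}
\item[\textbullet] In \cref{eq:H-local}, $H(b;x)$ is increasing in the current-column entries $x(s,b)$, each with coefficient $f_S(s)(1+\alpha)(b-s)$ minus rent terms from higher $s'$.
\item[\textbullet] $H(b;x)$ is decreasing in the earlier columns $x(s,b')$ with $b'<b$.
\item[\textbullet] Suppose any feasible $x'$ dominates the greedy pointwise on earlier columns.  Then $x'$ has smaller earlier-column terms in $H$, and monotonicity in $b$ forces $x'(\cdot,b)\ge x'(\cdot,b-1)\ge$ the greedy's copied column.
\item[\textbullet] Raising from low seller types upward is the cheapest way to reach $H=0$.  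The gain per unit of raise at low $s$ is largest, and raising at $s$ costs $\alpha$ in rents of types below.  This is the step to verify: with weak regularity, the marginal value $(1+\alpha)f_S(s)(b-s)-\alpha F_S(s-1)$, i.e.\ $f_S(s)\bigl((1+\alpha)(b-s)-\alpha r_s\bigr)$, is nonincreasing in $s$ up to the $(b-s)$ scaling.  So a threshold form is optimal.
\end{itemize}
Hence failure of the greedy means no feasible allocation exists, which proves (i).  None of this uses the value of $\beta$, only $\beta\ge0$ as a constant shift.

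For (ii), I would write the compressed execution explicitly.  Given a clear state $(y;r_t,\ldots,r_{b_0-1})$, the normalized differences $\widehat H_{t,b_0}(b;x)$ are affine functions of $y$ and the ratios $r$.  This holds after dividing by $F_S(t)$, since $f_S(s)/F_S(t)$ for $s\le t$ are determined by the $r_i$ via $F_S(i)=F_S(i-1)(1+1/r_i)$.  This gives the sufficient-statistic claim.

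For the comparison, I would argue monotonically.
\begin{itemize}
\item[\textbullet] A larger threshold allocation $y$ leaves a larger residual rent obligation.
\item[\textbullet] A longer prefix of ratios (so $\omega\prec\omega'$ in the suffix sense) means the state with the shorter vector has its accumulated obligations concentrated on fewer types.
\end{itemize}
Both effects show that every future required raise from $\omega'$ dominates that from $\omega$, by induction over future buyer values using the coefficient signs above.

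The main obstacle is this suffix comparison.  Dropping leading ratios changes the normalization $F_S(t)$ and the weights of the lower types, so I would first check a single-step removal of $r_t$.  The aim there is to show the normalized rent term changes in the right direction using $\alpha\ge0$ and weak regularity.  I would then iterate.
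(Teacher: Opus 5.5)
\textbf{Part (i) rests on a false claim.} You try to show that the greedy column is pointwise minimal among feasible monotone columns. This fails even when the earlier columns agree.

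Take $\alpha=1$, a seller on $\{0,1\}$ with $f_S(0)=f_S(1)=\tfrac12$, and $\beta\le1$. This seller is weakly regular, since $\psi_S^1(0)=0<\psi_S^1(1)=3$. The constraint $H(1;x)\ge0$ forces $x(0,1)\ge\beta/2$, and the greedy takes $x(0,1)=\beta/2$. At $b=2$, \cref{eq:H-local} reads
\[
 H(2;x)=\tfrac12\bigl[2(2x(0,2)-\beta)-x(1,2)-x(0,1)\bigr]+\tfrac12\bigl[2x(1,2)-\beta\bigr].
\]
Consider two columns with $H(2;x)=0$:
\begin{enumerate}[label=(\alph*)]
 \item The greedy raises only $x(0,2)$ and clears at $(x(0,2),x(1,2))=(7\beta/8,0)$.
 \item The column $(7\beta/10,\,7\beta/10)$ is nonincreasing in $s$ and lies above the copied column $(\beta/2,0)$.
\end{enumerate}
Neither column dominates the other, so your induction breaks already at $b=2$ (take $M=2$).

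Relatedly, $H(b;x)$ is not increasing in the current column. The coefficient of $x(s,b)$ is $f_S(s)\bigl((1+\alpha)b-\psi_S^\alpha(s)\bigr)$, which is nonpositive once $\psi_S^\alpha(s)\ge(1+\alpha)b$. This is exactly the greedy's virtual-cost-test failure mode, and your argument never handles it.

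What weak regularity does give is that the current benefit per unit of future rent, $\bigl((1+\alpha)b-\psi_S^\alpha(s)\bigr)/\alpha$, is nonincreasing in $s$. In the example, the greedy column imposes rent weight $\sum_s f_S(s)x(s,2)=7\beta/16$, while the alternative imposes $7\beta/10$. So the equivalence needs an exchange-type invariant that is compatible with monotonicity in $b$, not pointwise minimality. The paper does not reprove this part. It cites Liu et al.'s Theorem~4.6, which is stated for arbitrary $\beta$.

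\textbf{Part (ii) is only announced.} Your sufficient-statistic observation matches the paper. However, you name the suffix comparison as the main obstacle and leave it open, and you never address the ``equivalently, failure transfers'' half. The paper's route has three pieces:
\begin{enumerate}[label=(\alph*)]
 \item It works with the normalized equations $\widehat H_{t,b_0}$, in which the $\beta$-term is identical for the two aligned runs and cancels.
 \item It supposes the threshold order first reverses at some $b$. Earlier columns are still ordered and enter with nonpositive coefficients. The current-column coefficients between the two thresholds are strictly ordered the other way; they are positive because those positions passed the virtual-cost test. This gives $\widehat H(b;x')<\widehat H(b;x)=0$ for the more demanding run, a contradiction.
 \item It transfers failure mode by mode. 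A virtual-cost-test failure transfers through the ordered thresholds and weak regularity. Exhaustion of boundary candidates transfers through the same normalized comparison.
\end{enumerate}
Your ``induction using the coefficient signs above'' would have to be organized this way. It cannot rely on your claim that $H$ increases in every current-column entry, which holds only for entries that pass the virtual-cost test.
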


\begin{proof}
Part~(i) is Liu et al.'s Maximin--Greedy Equivalence
\cite[Theorem~4.6]{LQRW26}, stated there for arbitrary $\beta$.

For part~(ii), subtract the last clear equality $H(b_0;x)=0$ from each later
constraint and divide by the current threshold CDF.  This is the normalized
state equation used in their Proposition~5.1.  Its coefficients depend only on
the clear state and the future ratio tail.  In the comparison of two aligned
states, the term proportional to $\beta$ is identical in both normalized
equations and cancels.  Suppose their threshold order reverses for the first
time at buyer value $b$.  Earlier columns are ordered by the induction
hypothesis, and their coefficients in the normalized equation are
nonpositive; the current-column coefficients have the opposite strict order.
The two equations therefore satisfy
\[
 \widehat H(b;x')<\widehat H(b;x)=0
\]
for the more demanding run, contradicting its clear equality.  Successful
outputs preserve the dominance order.

For failure propagation, the ordered thresholds and weak regularity imply that
a virtual-cost-test failure in the less demanding run also occurs in the more
demanding run.  The other failure mode is exhaustion of the boundary
candidates.  A feasible fractional boundary value in the more demanding run,
together with the same normalized comparison, would give one in the less
demanding run.  Thus failure from the less demanding state implies failure from
the more demanding state.  These are the remaining steps of Liu et
al.~\cite[Proposition~5.1]{LQRW26}; after the target term cancels, they do not
depend on the value of $\beta$.
\end{proof}

An execution is \emph{threshold-stationary} through length $N$ when it clears
at threshold zero for every buyer value below $N$.  A \emph{proper-suffix
output} has deleted at least one positive leading ratio.
For a ratio sequence, normalize $F_N(0)=1$ and set
\[
 F_N(m)=\prod_{i=1}^{m}\left(1+\frac1{r_i}\right).
\]
The long execution uses the sequence
$(r_{\rm base},r_1,\ldots,r_{k+1})$.  The short execution deletes $r_1$ and
uses $(r_{\rm base},r_2,\ldots,r_{k+1})$.  The quantities
$\mathcal D_\beta$ and $Q_b$ below are the corresponding normalized gaps
between the long state and the artificial short state.

The next three subsections establish the parameterized versions of their
threshold-one feasibility, threshold-stationary truncation, and nonstationary
truncation results.  The final subsection applies their minimal-counterexample
argument.

\subsection{Feasibility at threshold one}

While the threshold remains zero, successive differences of the tight
constraints $H(b;x)=0$ give
\begin{equation}\label{eq:G-recurrence-app}
 x_b=\beta G_b,
 \qquad
 G_b=\left(1-\frac{q}{b}\right)G_{b-1}
 +\frac{q}{b}\frac{F_S(b-1)}{F_S(0)},
 \qquad G_1=q.
\end{equation}
Unrolling the recurrence gives
\[
 G_b=\sum_{t=1}^b\frac qt\frac{F_S(t-1)}{F_S(0)}
       \prod_{j=t+1}^b\left(1-\frac qj\right).
\]
Since
\[
 \sum_{t=1}^b\frac qt\prod_{j=t+1}^b\left(1-\frac qj\right)
 =1-\prod_{j=1}^b\left(1-\frac qj\right)<1,
\]
monotonicity of $F_S$ gives $G_b<F_S(b)/F_S(0)$.  Subtracting consecutive
instances of \cref{eq:G-recurrence-app} now yields
\[
 G_{b+1}-G_b=\frac{q}{b+1}
 \left(\frac{F_S(b)}{F_S(0)}-G_b\right)>0.
\]

Suppose the virtual-cost test failed at threshold one for buyer value $b$.
Put $\kappa=(1+\alpha)/\alpha=1/a$ and normalize $F_S(0)=1$.  Weak regularity
would imply, for $1\le m\le b-1$,
\[
 r_m=\frac{F_S(m-1)}{f_S(m)}\ge\kappa(b-m).
\]
Consequently, for $0\le i\le b-1$,
\[
 F_S(i)=\prod_{m=1}^i\left(1+\frac1{r_m}\right)
 \le\prod_{m=1}^i\left(1+\frac{a}{b-m}\right)
 =:\widetilde F^{(b+1)}(i).
\]
All coefficients in the unrolled formula for $G_b$ are nonnegative, so
$G_b$ is bounded by the corresponding extremal value
$\widetilde G_b^{(b+1)}$.  The beta--gamma calculation of Liu et
al.~\cite[Lemma~5.5]{LQRW26} is independent of the target and gives
\begin{equation}\label{eq:G-C-app}
 G_b\le\widetilde G_b^{(b+1)}<C(\alpha)=q\BetaFn(q,q).
\end{equation}
The threshold-zero candidate therefore respects the copied-column lower bound,
and $x_b=\beta G_b<\beta C(\alpha)\le1$.  It is feasible and is encountered
before threshold one.  This contradicts the hypothetical failure and proves the
parameterized form of Proposition~5.4.

\subsection{Threshold-stationary truncation}

Consider the long and short executions in Proposition~5.6, of lengths $k+2$
and $k+1$, and let $F_{k+2}$ be the normalized CDF generated by the long
inverse-hazard-ratio sequence.  Put
\[
 \varphi_i:=(i+1)(1+\alpha)+\alpha r_{i+1},
 \qquad i=0,\ldots,k.
\]
Weak regularity is $\varphi_0\le\cdots\le\varphi_k$.  Replace the factors $1/2$ in the two auxiliary
quantities in their Appendix~C by
\begin{align}
 K_\beta
 &:={}
 \sum_{m=0}^{k+1}
 \frac{\beta F_{k+2}(m)}{(1+\alpha)(m+1)}
 \prod_{j=m+2}^{k+2}\left(1-\frac qj\right),
 \label{eq:Kbeta}\\
 J_\beta
 &:={}
 \sum_{m=1}^{k+1}
 \frac{\beta\alpha(k+1-m)F_{k+2}(m)}
      {(1+\alpha)m(m+1)}
 \prod_{j=m+2}^{k+1}\left(1-\frac qj\right).
 \label{eq:Jbeta}
\end{align}
Thus
\begin{equation}\label{eq:KJ-scaling}
 K_\beta=2\beta K_{1/2},
 \qquad J_\beta=2\beta J_{1/2}.
\end{equation}
Here $K_\beta$ is the allocation required to clear the threshold-zero
candidate at buyer value $k+2$.  If $K_\beta<1$, that candidate is feasible;
if $K_\beta=1$, it clears with allocation one.  In either case the output
retains the first positive ratio.  Thus a proper-suffix output implies
$K_\beta>1$, and the weaker inequality $K_\beta\ge1$ will suffice.

For a candidate whose ratio suffix begins at $t\ge2$, let
$x^{\rm long}$ denote the long candidate and let $x^{\rm short}$ be the
corresponding artificial candidate in the short execution, with the same
boundary allocation.  The candidate-dependent terms in the two normalized
constraints cancel, as in equations~(12)--(15) of Liu et al.  Define
$\mathcal D_\beta$ by
\begin{equation}\label{eq:D-residual-app}
 \widehat H^{\rm long}(k+2;x^{\rm long})
 -\left(1+\frac1{r_1}\right)
  \widehat H^{\rm short}(k+1;x^{\rm short})
 =\mathcal D_\beta.
\end{equation}
The right side is independent of $t$ and of the boundary allocation.

The normalized CDFs of the two executions satisfy
$F_{k+2}(m+1)=(1+1/r_1)F_{k+1}(m)$.  Substituting the threshold-zero recurrence
\cref{eq:G-recurrence-app} into the long-minus-short state residual and
collecting the terms gives
\begin{equation}\label{eq:D-beta-app}
 \mathcal D_\beta
 =J_\beta+1-\beta(k+2)
    \prod_{j=2}^{k+2}\left(1-\frac qj\right).
\end{equation}
The two terms after $J_\beta$ are independent of the virtual-cost vector.
Thus, on every level set of $K_\beta$, minimizing $\mathcal D_\beta$ is the
same as minimizing $J_\beta$.  Equation~\eqref{eq:KJ-scaling} already shows
that the level sets and every first-order comparison are exactly those in the
$\beta=1/2$ proof, up to the common positive factor $2\beta$.

The adjacent-pooling calculation is as follows.  Let
\[
 d_\ell:=\varphi_\ell-(\ell+1)(1+\alpha)>0.
\]
For $m\ge\ell+1$,
\begin{equation}\label{eq:F-derivative-app}
 \frac{\partial F_{k+2}(m)}{\partial\varphi_\ell}
 =-\frac{\alpha}{d_\ell(d_\ell+\alpha)}F_{k+2}(m),
\end{equation}
and the derivative is zero for $m\le\ell$.  Hence
$\partial_{\varphi_\ell}K_\beta<0$.  Put
\[
 R_\ell:=
 \frac{\partial_{\varphi_\ell}J_\beta}
      {\partial_{\varphi_\ell}K_\beta}.
\]
The common factor $2\beta$ cancels in $R_\ell$.  After the common negative
factor in \cref{eq:F-derivative-app} is removed, $R_\ell$ is a weighted
average over $m=\ell+1,\ldots,k+1$.  For each summation index, the quotient
between the numerator and denominator coefficients is
\[
 \frac{\alpha(1+\alpha)(k+2)(k+1-m)}
 {m((1+\alpha)(k+2)-1)},
\]
which is strictly decreasing in $m$.  The average defining $R_{\ell+1}$ is
the same weighted tail with its first, largest term removed.  Therefore
$R_\ell>R_{\ell+1}$ for $0\le\ell<k$.

If $\varphi_i<\varphi_{i+1}$, the implicit-function theorem gives a local
level-set direction satisfying
\[
 \frac{d\varphi_{i+1}}{d\varphi_i}
 =-\frac{\partial_{\varphi_i}K_\beta}
         {\partial_{\varphi_{i+1}}K_\beta}<0,
 \qquad
 \frac{dJ_\beta}{d\varphi_i}
 =\partial_{\varphi_i}K_\beta(R_i-R_{i+1})<0.
\]
Hence increasing the smaller adjacent virtual cost and decreasing the larger
one preserves $K_\beta$, preserves weak regularity for a sufficiently small
move, and strictly lowers $J_\beta$.  Repeating these pooling moves forces every minimizer in the compactified
admissible level set to satisfy $\varphi_0=\cdots=\varphi_k$.  The
compactification and continuity argument in Liu et
al.~\cite[Proof of Proposition~5.6]{LQRW26} applies without change: by
\cref{eq:KJ-scaling}, fixing $K_\beta$ is equivalent to fixing $K_{1/2}$,
and $J_\beta$ is the same positive multiple of $J_{1/2}$.

For that constant-virtual-cost case, let $r>0$ be the terminal positive ratio
and write $\lambda:=ar$.  The long and short sequences are, respectively,
\[
 r_i^{(k+2)}=\frac{k+1-i+\lambda}{a}\quad(1\le i\le k+1),
 \qquad
 r_i^{(k+1)}=\frac{k-i+\lambda}{a}\quad(1\le i\le k).
\]
For $t\ge0$, let $(a)_t=a(a+1)\cdots(a+t-1)$, with $(a)_0=1$.  Define
\begin{equation}\label{eq:U-app}
 U_N(\lambda):=
 \sum_{t=1}^{N}\frac qtF_N(t-1)
 \prod_{j=t+1}^{N}\left(1-\frac qj\right).
\end{equation}
Rejection of the threshold-zero candidate gives
$\beta U_{k+2}(\lambda)\ge1$.  The summands are decreasing in $\lambda$, and
the gamma--beta calculation in Lemma~5.8 gives
\begin{equation}\label{eq:U-cutoff-app}
 U_{k+2}(q)
 <\sum_{j=0}^{\infty}\frac{q}{j+q}\frac{(a)_j}{j!}
 =q\BetaFn(q,q)=C(\alpha)\le\frac1\beta.
\end{equation}
It follows that $\lambda<q$.

Set
\[
 L_k(\lambda):=
 \left(1+\frac{a}{k+\lambda}\right)(k+1)U_{k+1}(\lambda)
 -(k+2)U_{k+2}(\lambda).
\]
Substituting the two $U$-expressions into \cref{eq:D-beta-app} and collecting
the common factor $\beta/q$ gives
\begin{equation}\label{eq:D-L-identity}
 \mathcal D_\beta
 =\frac{\beta}{q}\left(\frac q\beta+L_k(\lambda)\right).
\end{equation}
To bound $L_k$, write $U_{k+1}(\lambda)=\sum_{t=1}^{k+1}A_t(\lambda)$, where
\[
 A_t(\lambda):=\frac qtF_{k+1}(t-1)
      \prod_{j=t+1}^{k+1}\left(1-\frac qj\right).
\]
The constant-cost ratios give
\[
 \frac{k+\lambda+1-t}{k+\lambda+1-t+a}A_t
 =\frac{t-1}{t-1+a}A_{t-1}\quad(2\le t\le k+1),
 \qquad
 \left(1+\frac{a}{k+\lambda}\right)F_{k+1}(t-1)=F_{k+2}(t).
\]
Substitution yields $L_k(\lambda)=\sum_{t=1}^{k+1}w_tA_t(\lambda)$ with
\[
 w_1=\frac{ak(k+\lambda+a)}{(1+a)(k+\lambda)}-(k+1+a),
 \qquad
 w_t=\frac{a(k+\lambda+a)(k+1-t)}{(k+\lambda)(t+a)}\quad(t\ge2).
\]
Because $\lambda<q=1-a$, one has $A_1(\lambda)=A_1(q)$,
$A_t(\lambda)\ge A_t(q)$ for $t\ge2$, and
$(k+\lambda+a)/(k+\lambda)\ge(k+1)/(k+1-a)$.  Hence
\begin{align*}
 L_k(\lambda)
 &\ge
 \left(\frac{ak(k+1)}{(1+a)(k+1-a)}-(k+1+a)\right)A_1(q)\\
 &\quad+
 \sum_{t=2}^{k+1}
 \frac{a(k+1)(k+1-t)}{(k+1-a)(t+a)}A_t(q)
 =-qS_k,
\end{align*}
where the recurrence for $A_t(q)$ gives the final telescoping identity and
\[
 S_k:=\frac{(a)_{k+1}}{(k+1)!}
 +\sum_{j=0}^{k}\frac{q}{j+q}\frac{(a)_j}{j!}.
\]
Put $c_j=(a)_j/j!$.  For
$j\ge k+1$,
\[
 \frac{q}{j+q}c_j>\frac{q}{j+1}c_j=c_j-c_{j+1}.
\]
Thus the omitted infinite tail is larger than $c_{k+1}$, the first term in
$S_k$.  Integrating the binomial series
$(1-t)^{-a}=\sum_{j\ge0}c_jt^j$ against $qt^{q-1}$ gives
\begin{equation}\label{eq:finite-beta-sum-app}
 S_k<\sum_{j=0}^{\infty}\frac{q}{j+q}\frac{(a)_j}{j!}
 =q\BetaFn(q,q)=C(\alpha)\le\frac1\beta.
\end{equation}
Combining \cref{eq:D-L-identity,eq:finite-beta-sum-app} gives
\[
 \mathcal D_\beta
 \ge\beta\left(\frac1\beta-S_k\right)>0.
\]
If the long candidate clears, \cref{eq:D-residual-app} and
$\mathcal D_\beta>0$ imply that the corresponding short residual is negative.
The short execution must therefore fail or increase its boundary allocation,
which produces a dominating state.  If the long execution fails a
virtual-cost test, weak regularity transfers the failure to the corresponding
short threshold.  If it exhausts all boundary candidates, the same residual
comparison is negative for every corresponding short candidate, so the short
execution also fails.  This proves the parameterized threshold-stationary
truncation result.

\subsection{Nonstationary propagation}

The nonstationary allocations are affine in $\beta$.  The following
calculation exposes the cancellation in the long--short clear equations.  We use the local indexing of Liu et al.: in this subsection, $r_0>0$
denotes the first positive ratio deleted in the short execution.  At the first turning point $j+2$, let $r_j^*\le r_j$ be the ratio
at which the short execution reaches allocation one.  Their equations~(38)
and~(40), with symbolic $\beta$, are
\begin{align}
 &(1+\alpha)(j+1)(1-x_{j+1}^{(k+1)})
 =\beta\prod_{i=1}^{j-1}\left(1+\frac1{r_i}\right)
       \left(\frac1{r_j^*}-\frac1{r_j}\right),
 \label{eq:initial-short}\\
 &\frac{R_{j+2}}{r_0}(1-x_{j+2}^{(k+2)})
 >\beta\left(1+\frac1{r_0}\right)
   \prod_{i=1}^{j-1}\left(1+\frac1{r_i}\right)
       \left(\frac1{r_j^*}-\frac1{r_j}\right),
 \label{eq:initial-long}
\end{align}
where $R_b:=(b-1)(1+\alpha)-\alpha r_0$.  The long execution passed the
threshold-one virtual-cost test at the turning point, which gives
$R_{j+2}>0$; thereafter $R_b$ increases by $1+\alpha$.  Subtracting the
second multiple of \cref{eq:initial-short} from
\cref{eq:initial-long} gives
\[
 Q_{j+2}:=
 \frac{R_{j+2}}{r_0}(1-x_{j+2}^{(k+2)})
 -\left(1+\frac1{r_0}\right)(1+\alpha)(j+1)
    (1-x_{j+1}^{(k+1)})>0
\]
for every $\beta>0$.

For propagation, write $y_b=x_b^{(k+2)}$,
$z_{b-1}=x_{b-1}^{(k+1)}$, and
$P_b=\prod_{\ell=1}^{b-2}(1+1/r_\ell)$.  When the long execution raises its
allocation and clears the current constraint, the two recurrences are
\begin{align}
 &(b-1)(1+\alpha)(1-z_{b-1})
 =((b-1)(1+\alpha)-1)(1-z_{b-2})+1-\beta P_b,
 \label{eq:short-propagation}\\
 &0=-\frac{R_b}{r_0}(1-y_b)
 +\frac{R_b-1}{r_0}(1-y_{b-1})
 +\left(1+\frac1{r_0}\right)(1-\beta P_b).
 \label{eq:long-propagation}
\end{align}
The identical affine terms $1-\beta P_b$ cancel.  With
\[
 Q_b:=\frac{R_b}{r_0}(1-y_b)
 -\left(1+\frac1{r_0}\right)(b-1)(1+\alpha)(1-z_{b-1}),
\]
subtraction yields
\begin{equation}\label{eq:Q-propagation}
 Q_b=Q_{b-1}+\alpha\left[
 \frac{1-y_{b-1}}{r_0}
 -\left(1+\frac1{r_0}\right)(1-z_{b-2})
 \right].
\end{equation}
The definition of $Q_{b-1}$ gives
\[
 R_{b-1}\left[
 \frac{1-y_{b-1}}{r_0}
 -\left(1+\frac1{r_0}\right)(1-z_{b-2})
 \right]
 =Q_{b-1}+\alpha(r_0+1)(1-z_{b-2})>0.
\]
Thus $Q_b>Q_{b-1}>0$.

For a copied long column, $y_b=y_{b-1}$.  The short threshold-zero
allocations are nondecreasing, so $1-z_{b-1}\le1-z_{b-2}$.  Moreover,
\[
 \frac{R_b}{(b-1)(1+\alpha)}
 =1-\frac{\alpha r_0}{(b-1)(1+\alpha)}
\]
is increasing in $b$.  The inequality $Q_{b-1}>0$ therefore gives $Q_b>0$
directly from the definition of $Q_b$.  At the terminal step,
let $\widehat H_{\rm long}$ and $\widehat H_{\rm short}$ denote the two
constraint residuals after the positive normalization used in the definition
of $Q_b$.  The same subtraction gives
\[
 \widehat H_{\rm long}
 -\left(1+\frac1{r_0}\right)\widehat H_{\rm short}
 =Q_{k+2}>0.
\]
If the long execution clears at its terminal candidate, the artificial short
candidate has negative residual and the short execution must either fail or
raise its boundary allocation, producing a dominating state.  If the long
execution fails, the same strict comparison transfers failure to the short
execution.  This proves both conclusions of the parameterized
nonstationary-truncation proposition.

\subsection{Minimal counterexample}

Assume that the greedy construction fails, and choose a failing global ratio
vector
\[
 (r_0,r_1,\ldots,r_m),\qquad r_0=0,\quad r_i>0\ (i\ge1),
\]
of minimum length.  Failure occurs after the full positive tail has been
processed; otherwise the prefix ending at the first failure is a shorter
counterexample.

First suppose that no successful clear step before failure has positive
threshold.  The execution is threshold-stationary.  Delete $r_1$ and apply the
threshold-stationary truncation proved above.  Failure transfers to
$(r_0,r_2,\ldots,r_m)$, contradicting minimality.

Now suppose a first positive-threshold clear step exists.  If its threshold is
at least two, call its buyer value $k+2$.  The long prefix
$(r_0,\ldots,r_{k+1})$ is threshold-stationary before this step, and its
output has a proper ratio suffix.  Threshold-stationary truncation says that
the prefix with $r_1$ deleted either fails or produces a state dominating the
long-prefix state.  Failure is already a shorter counterexample.  In the
dominance case, append the common remaining tail
$(r_{k+2},\ldots,r_m)$.  State sufficiency identifies the continuation from
the long-prefix state with the original failing execution, and
\cref{lem:imported-toolkit}(ii) transfers failure to the dominating truncated
state.  This again yields a shorter failing vector.

It remains that the first positive threshold equals one.  Let $k+2$ be the
first later buyer value at which the execution either fails or clears at a
threshold at least two.  Such a value exists because the full execution
fails.  Every earlier threshold in this prefix is at most one.  The
nonstationary truncation proved above says that deleting $r_1$ either makes
the short prefix fail or makes its output dominate the long-prefix output.
The first alternative is a shorter counterexample.  In the second, append the
common remaining tail and invoke state sufficiency and
\cref{lem:imported-toolkit}(ii) to transfer the eventual failure to the short
execution.

Every possible first departure from threshold zero produces a shorter failing
ratio vector.  This contradicts minimality.  The greedy construction never
fails, and \cref{lem:imported-toolkit}(i) proves \cref{prop:capacity}.

\section{Interval-arithmetic certificate}\label{app:certificate}

This appendix gives the interval calculation underlying
\cref{tab:certificate}.  The decimal parameters in \cref{eq:parameters} are
interpreted as exact rationals, and all transcendental quantities are enclosed
with Arb ball arithmetic at 100 decimal digits.  Direct interval evaluation
establishes the conditions in \cref{lem:hard-validity}: $0<a<1$,
$0<x<z<1$, $a>2k_2$, $q_0\in(0,1)$, and $\psi(x)<1$.  The remaining
calculation consists of the following finite polynomial and interval
operations.

Write
\[
 h(y)=k_1y+k_2y^2,
 \qquad
 C_0=e^{-(k_1+k_2)},
 \qquad
 Q(y)=C_0y^ae^{h(y)}.
\]
Then
\begin{align}
 I_F&=C_0\int_0^1y^{a-1}e^{h(y)}\,dy,\notag\\
 I_S&=C_0\int_0^1y^ae^{h(y)}\,dy.
\end{align}
For $I_B$, define
\[
 H(u):=\frac{u}{1+u}-u\log x-u\log(1+u)+u\log u.
\]
Using $Q'(y)=C_0y^{a-1}e^{h(y)}A(y)$ and integrating the $\log y$ term in
closed form gives
\begin{equation}\label{eq:IB-series-form}
 I_B=C_0\int_0^1y^{a-1}
 \left[e^{h(y)}H(A(y))-e^{h(y)}A(y)\log y\right]dy.
\end{equation}

Set $L=30$.  Let $E_{L-1}(y)$ be the Taylor polynomial obtained by
truncating the expansion of $e^{h(y)}$ after order $L-1$ in $h(y)$.  Let
$H_{L-1}(y)$ be the Taylor polynomial of $H(a+d)$ through order $L-1$ in $d$,
after substituting $d=A(y)-a$.  Define the finite polynomials
\[
 P(y):=E_{L-1}(y)H_{L-1}(y)=\sum_np_ny^n,
 \qquad
 J(y):=E_{L-1}(y)A(y)=\sum_nj_ny^n.
\]
Their contribution to \cref{eq:IB-series-form} is evaluated as
\[
 C_0\left(\sum_n\frac{p_n}{a+n}
 +\sum_n\frac{j_n}{(a+n)^2}\right).
\]

Set $K=k_1+k_2$ and $D=k_1+2k_2$.  On $[0,1]$, the exponential remainder is
bounded by
\[
 R_E\le e^K\frac{K^L}{L!}.
\]
For $L\ge2$, termwise differentiation of $H$ gives the Taylor remainder bound
\begin{align}
R_H\le{}&
 \frac{D(D/a)^{L-1}}{L(L-1)(1-D/a)}
 +\frac{D(D/(1+a))^{L-1}}
       {L(L-1)(1-D/(1+a))}\notag\\
&+\frac{(D/(1+a))^L}{L(1-D/(1+a))}
 +\frac{(D/(1+a))^L}{(1+a)(1-D/(1+a))}.
\label{eq:RH}
\end{align}
The exact parameters satisfy
\[
 0<a\le A(y)\le A(1)<0.1,
 \qquad -\log x<4,
 \qquad -\log a<3.
\]
Thus $|\log A(y)|\le-\log a<3$, and, for $u=A(y)$,
\[
 |H(u)|<0.1+0.4+0.01+0.3<1.
\]
Hence the uniform remainders for the two
products in \cref{eq:IB-series-form} are at most
\[
 R_P=R_E+e^K R_H,
 \qquad
 R_J=R_E(a+D).
\]
After integration, the error bounds are
\[
 |\Delta I_F|\le C_0R_E/a,
 \quad
 |\Delta I_S|\le C_0R_E/(a+1),
 \quad
 |\Delta I_B|\le C_0(R_P/a+R_J/a^2).
\]
For the exact parameters in \cref{eq:parameters}, outward-rounded Arb
evaluation gives error bounds smaller than $6.71\times10^{-106}$,
$5.84\times10^{-107}$, and $1.19\times10^{-46}$, respectively.  Evaluating
the finite sums and propagating the resulting balls through
\cref{eq:F-explicit,eq:S-explicit,eq:B-explicit} gives the enclosures in
\cref{tab:certificate}.  Directed rounding yields
\[
 \frac{2\mathcal F}{\mathcal S+\mathcal B}
 >2.17276852308451.
\]

\end{document}